\numberwithin{equation}{section}
\newcommand{\C}{\mathbb C}
\newcommand{\R}{\mathbb R}
\newcommand{\Z}{\mathbb Z}
\newcommand{\N}{\mathbb N}
\renewcommand{\d}{\prime}
\newcommand{\dd}{{\prime \prime}}
\renewcommand{\Re}{{\rm Re}\,}
\renewcommand{\Im}{{\rm Im}\,}
\newtheorem{theorem}{Theorem}[section]
\newtheorem{lemma}[theorem]{Lemma}
\newtheorem{corollary}[theorem]{Corollary}
\newtheorem{definition}{Definition}
\newtheorem*{remark}{Remark}
\begin{document}
\title[]
{Asymptotics of eigenvalues of non-self adjoint Schr\"odinger operators on a half-line}
\author[]
{Kwang C. Shin}

\address{Department of Mathematics, University of West Georgia, Carrollton, GA USA}
\date{January 26, 2009}

\begin{abstract}
We study the eigenvalues of  the non-self adjoint problem $-y^\dd+V(x)y=E y$ on the half-line $0\leq x<+\infty$
under the Robin boundary condition at $x=0$, where $V$ is a monic polynomial of degree $\geq 3$. We obtain a
Bohr-Sommerfeld-like asymptotic formula for $E_n$ that depends on the boundary conditions. Consequently, we solve
certain inverse spectral problems, recovering the potential $V$ and boundary condition from the first $(m+2)$
terms of the asymptotic formula.
\end{abstract}

\maketitle

\begin{center}
\end{center}

\baselineskip = 18pt

\section{\bf Introduction and results}
\label{introduction} 
In this paper, we study  Schr\"odinger eigenvalue problems with
real or complex polynomial potentials on the half-line
$[0,\infty)$ under a Robin  boundary condition at $0$. We develop
a new kind of
 asymptotic formula relating the index $n$ to a series of fractional powers of the eigenvalue $E_n$
(see Theorem \ref{the_main}).  The coefficients   in the series are explicit polynomials given in terms of the
coefficients of the polynomial potential. In contrast, the asymptotic formulas in Titchmarsh \cite{TIT1} of the
Bohr-Sommerfeld type (see \eqref{asy_TIT} and  \eqref{asy_TIT1}) seem in practice to require numerical
integrations in the complex plane. Another advantage of our method is that  it is easier to improve the error
terms,  as we will explain in the remark at the end of Section \ref{main_sect}.

For an integer $m\geq 3$,  consider the Schr\"odinger eigenvalue problem
\begin{equation}\label{ptsym}
-y^\dd(x)+\left[x^m+P(x)\right]y(x)=E y(x),\quad 0\leq x<+\infty,
\end{equation}
under the Robin boundary condition at $x=0$
\begin{equation}\label{bd_cond}
 y(0)\cos\theta+
y^\d(0)\sin\theta=0\quad\text{and}\quad y(+\infty)=0,
\end{equation}
with  $\theta\in\C$ fixed, where the lower order terms in the potential are
\begin{equation}
P(x)=a_1x^{m-1}+a_2x^{m-2}+\cdots+a_{m-1}x+a_m,\quad a_j\in\C.
\end{equation}
Write $ {a}=(a_1,\,a_2,\dots,\, a_m)$ for the coefficient vector of $P$.

Sibuya \cite{Sibuya} showed that there are infinitely many eigenvalues, by showing that eigenvalues are zeros of
an entire function of order $\frac{1}{2}+\frac{1}{m}$.

Titchmarsh \cite{TIT1} showed that  under the Dirichlet boundary condition, \eqref{bd_cond} with $\sin \theta=0$,
the eigenvalue problem \eqref{ptsym} has the eigenvalues $\{E_n\}_{n\geq0}$ satisfying
\begin{equation}\label{asy_TIT}
\int_0^{X_n}\sqrt{E_n-x^m-P(x)}\,dx=\left(n+\frac{1}{4}\right)\pi+O\left(\frac{1}{n}\right),
\end{equation}
where $X_n=X(E_n)>0$ solves $X_n^m+P(X_n)=E_n$, provided that $x^m+P(x)$ and its first derivative are
non-decreasing on the interval $(0,+\infty)$. With this restricted class of non-decreasing and convex polynomial
potentials under the Neumann boundary condition $\theta=\frac{\pi}{2}$ at $x=0$, Titchmarsh  also showed that
\begin{equation}\label{asy_TIT1}
\int_0^{X_n}\sqrt{E_n-x^m-P(x)}\,dx=\left(n+\frac{3}{4}\right)\pi+O\left(\frac{1}{n}\right).
\end{equation}
 Then he claimed without proof that \eqref{asy_TIT} holds under the Robin boundary condition \eqref{bd_cond}
($\sin\theta\not=0$). That claim is wrong except for $\sin\theta=0,\,\pm1$. There must be an additional term
$\left(\cot\theta\right)E_n^{-\frac{1}{2}}=(const.)n^{-\frac{m}{m+2}}$ on the left-hand side of \eqref{asy_TIT},
as we will see in Theorem \ref{the_main}.

Let us compare these asymptotic formulas with the following main result of this paper.
\begin{theorem}\label{the_main}
If $\sin\theta\not=0$ (the Robin boundary condition at $x=0$, other than  Dirichlet), then  the eigenvalues
$\left\{E_n\right\}_{n=0}^{\infty}$ of \eqref{ptsym} with boundary condition \eqref{bd_cond} satisfy that
\begin{equation}\label{N_boundary}
\sum_{j=0}^{m+1}c_j( {a})E_n^{\frac{1}{2}+\frac{1-j}{m}}+\pi^{-1}(\cot\theta)
E_n^{-\frac{1}{2}}+O\left(E_n^{-\frac{1}{2}-\frac{1}{m}}\right)\underset{n\to+\infty}{=}\left(n+\frac{1}{4}\right),
\end{equation}
where the $c_0( {a})= (2\pi)^{-1}B\left(\frac{1}{2},\,1+\frac{1}{m}\right)$ and the $c_j( {a})$ are explicit
polynomials in the coefficients $ {a}$ of the polynomial potential, defined  in \eqref{c_def}.

If $\sin\theta=0$ (the Dirichlet boundary condition at $x=0$), then   the eigenvalues
$\left\{E_n\right\}_{n=0}^{\infty}$ of \eqref{ptsym} with boundary condition \eqref{bd_cond} satisfy that
\begin{equation}\label{D_boundary}
\sum_{j=0}^{m+1}c_j( {a})E_n^{\frac{1}{2}+\frac{1-j}{m}}
+O\left(E_n^{-\frac{1}{2}-\frac{1}{m}}\right)\underset{n\to+\infty}{=}\left(n+\frac{3}{4}\right).
\end{equation}
\end{theorem}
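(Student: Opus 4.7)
The plan is to characterize eigenvalues as the zeros of the function
\[
F(E) := y_\infty(0,E)\cos\theta + y_\infty'(0,E)\sin\theta,
\]
where $y_\infty(\cdot,E)$ is Sibuya's subdominant (decaying as $x\to+\infty$) solution of \eqref{ptsym}, unique up to scalar. The problem then reduces to obtaining precise asymptotic expansions of $y_\infty(0,E)$ and $y_\infty'(0,E)$ as $E\to+\infty$ along the positive real axis, and solving the resulting transcendental equation $F(E_n)=0$.

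First I would derive Liouville--Green (WKB) asymptotics for $y_\infty$ on $[0,X(E)-\delta]$, where $X(E)$ is the unique positive real turning point with $V(X(E))=E$ for $V(x):=x^m+P(x)$. After matching through the turning point with the standard $-\pi/4$ Stokes phase, one expects
\[
y_\infty(x,E) \sim 2(E-V(x))^{-1/4}\cos\!\Bigl(\int_x^{X(E)}\sqrt{E-V(t)}\,dt - \tfrac{\pi}{4}\Bigr),
\]
with the derivative dominated by differentiating the phase. Evaluating at $x=0$ expresses both $y_\infty(0,E)$ and $y_\infty'(0,E)$ in terms of the action $I(E):=\int_0^{X(E)}\sqrt{E-V(t)}\,dt$ and the amplitudes $(E-V(0))^{\pm 1/4}$. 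Substituting into $F(E)=0$ and dividing through yields the quantization condition
\[
\tan\!\bigl(I(E)-\tfrac{\pi}{4}\bigr) = -\cot\theta\,(E-V(0))^{-1/2}\qquad(\sin\theta\neq 0),
\]
which in the Dirichlet case $\sin\theta=0$ collapses to $\cos(I(E)-\tfrac{\pi}{4})=0$.

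Next I would invert these transcendental equations. The Dirichlet case gives $I(E_n)=(n+\tfrac{3}{4})\pi$ directly. For the Robin case, applying $\arctan$ to a right-hand side of order $E^{-1/2}$ yields $I(E_n)=(n+\tfrac{1}{4})\pi-\cot\theta\,E_n^{-1/2}+O(E_n^{-3/2})$, and division by $\pi$ produces the announced Robin correction $\pi^{-1}\cot\theta\,E_n^{-1/2}$. The remaining ingredient---the computational heart of the theorem---is the expansion of $I(E)/\pi$ as a finite sum of fractional powers of $E$. Performing the rescaling $t=E^{1/m}s$ converts $I(E)$ into
\[
E^{\frac12+\frac1m}\int_0^{X(E)E^{-1/m}}\sqrt{1-s^m-\eta(s,E)}\,ds,\qquad \eta(s,E)=\sum_{k=1}^m a_k E^{-k/m}s^{m-k}.
\]
Expanding the inner square root as a binomial series in $\eta/(1-s^m)$, integrating term by term against $\sqrt{1-s^m}$ to produce Beta-function integrals, and collecting by powers of $E^{-1/m}$ yields
\[
I(E) = \pi\sum_{j=0}^{m+1}c_j(a)\,E^{\frac12+\frac{1-j}{m}} + O\!\bigl(E^{-\frac12-\frac1m}\bigr),
\]
with leading coefficient $c_0=(2\pi)^{-1}B(\tfrac12,1+\tfrac1m)$ arising from $\int_0^1\sqrt{1-s^m}\,ds$.

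The main obstacle, in my view, is establishing the WKB formula for $y_\infty$ with an error small enough both to survive multiplication by the possibly large factor $\csc\theta$ and to be preserved through the $\arctan$ inversion. This requires uniform Liouville--Green error bounds across $[0,X(E)]$, careful handling of the turning-point region (via complex deformation or Airy-function matching), and a justification of termwise differentiation to extract $y_\infty'(0,E)$ to the same precision. A secondary issue is showing that every sufficiently large integer $n$ indeed indexes an eigenvalue with the claimed counting, which can be handled by the argument principle applied to the entire function $F$ of order $\tfrac12+\tfrac1m$ furnished by Sibuya. Once these estimates are in place, the remaining work reduces to the Beta-function bookkeeping sketched above.
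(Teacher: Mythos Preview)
Your approach is the classical Liouville--Green route through a real turning point, which is genuinely different from the paper's method. The paper never isolates a turning point at all: instead it works with $\lambda=-E$ and uses Sibuya's asymptotics for $f(0,a,\lambda)$, which are valid only \emph{away} from the negative real $\lambda$-axis (where the eigenvalues accumulate). To reach that axis it invokes the Stokes connection formula $f_1=Cf_0+\widetilde{C}f_{-1}$ among the rotated solutions $f_k(z,a,\lambda)=f(\omega^{-k}z,G^k(a),\omega^{2k}\lambda)$, thereby expressing $f(0,a,\lambda)$ near $\arg\lambda=\pi$ as a combination of two terms whose phases are $L(G^{\pm1}(a),\omega^{\mp2}\lambda)$. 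The resulting oscillatory factor plays the role of your $\cos(I(E)-\pi/4)$, and the action $I(E)$ is replaced by the difference $H(a,\lambda)=L(G^{-1}(a),\omega^{-2}\lambda)-L(G(a),\omega^{-m}\lambda)$, each $L$ being an integral over $[0,\infty)$ rather than $[0,X(E)]$. The payoff is that the error $O(\lambda^{-\rho})$ comes cleanly from bounds on $\int_0^\infty|s_{jk}|\,dx$ with no Airy matching, and the indexing is pinned down by a continuity argument (Hurwitz) deforming from the known case $a=0$.

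Your plan has a genuine gap at the level of generality required: the theorem is stated for arbitrary $a_j\in\C$, and for complex $P$ there is in general no ``unique positive real turning point'' $X(E)$ with $V(X(E))=E$, nor need $E_n$ lie on the positive real axis (only $\arg E_n\to0$). The real-axis WKB with Airy matching you describe does not apply as written; one would need to move the turning point and the integration contour into the complex plane and redo the matching there, which is exactly the difficulty the paper's connection-formula method sidesteps. A second, more technical concern: your binomial expansion $\sqrt{1-s^m}\sqrt{1-\eta/(1-s^m)}$ breaks down near $s=1$ since $\eta/(1-s^m)$ is not small there, so termwise integration over $[0,X(E)E^{-1/m}]$ cannot be justified without a separate treatment of a neighbourhood of the (moving) endpoint. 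Both issues are surmountable, but they are precisely the places where the paper's complex-rotation approach buys simplicity and full generality.
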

The formulas in Theorem \ref{the_main} are explicit and ready to be inverted, for they express $n$ in terms of
$E_n$. In contrast, the asymptotic formulas \eqref{asy_TIT} and \eqref{asy_TIT1} would need  first to be
numerically integrated and then inverted.

To the best of my knowledge, Theorem \ref{the_main} is the first
result of its type. Further, we treat arbitrary real and complex
monic polynomial potentials of degree $\geq 3$. Titchmarsh
\cite{TIT1} required his potentials to be non-decreasing and
convex, and polynomial-like at infinity.

Another motivation for Theorem \ref{the_main} is that this
half-line problem is a building block for the full line problem
and other ``connection" problems in the complex plane whose
spectral determinants can be described by the spectral
determinants of two half-line problems of the form \eqref{ptsym}
(see, e.g., \cite{Shin2, Sibuya}).

This paper is organized as follows. In Section \ref{sec2-1}, we present direct and inverse spectral results
implied by Theorem \ref{the_main}. In Section \ref{sec2}, we review a method of Sibuya \cite{Sibuya}, regarding
solutions of \eqref{ptsym} satisfying $y(+\infty)=0$. We  outline his proof of Theorem \ref{prop-2} that studies
asymptotics of certain solutions of \eqref{ptsym}. In Section \ref{main_sect}, we improve the error terms of the
solutions in Theorem \ref{prop-2} and indicate how one can improve the error terms even further.

The asymptotics of solutions in Sections \ref{sec2} and \ref{main_sect} are valid when $E$ is far away from the
positive real axis while all but finitely many eigenvalues lie near the positive real axis in the complex plane.
In Section \ref{prop_sect}, by using a connection formula we will express solutions having $E$  near the positive
real axis by a linear combination of solutions having $E$ away from the positive real axis. In Section
\ref{main_thm_sec}, we prove the main result Theorem \ref{the_main}. Finally, in the Appendix we will compute
various constants that appear in our asymptotic formulas.

The interest of physicists to one-dimensional Schr\"odinger eigenvalue problem with complex potentials renewed in
recent years due to the discovery that certain ``$\mathcal{PT}$-symmetric'' potentials have purely real spectrum
\cite{Bender, Bender2, CCG, Dorey, Levai, Shin, VOR3}.  For general background on complex differential equations,
one can consult classic monographs by Hille \cite{Hille,Hille1} or by Sibuya \cite{Sibuya}. Also, Eremenko {\em
et.\ al.}   recently studied some Schr\"odinger equations with polynomial potentials and derived interesting
results on analytic continuation of  eigenvalues \cite{Eremenko1} and on location of zeros of eigenfunctions
\cite{Eremenko2,Eremenko3}.

\section{\bf Direct and inverse spectral consequences of the main result}\label{sec2-1}
In this section, we will derive some consequences of Theorem \ref{the_main}. Fix $m\geq 3$ and $\theta\in\C$, and
let $E_n$ be the eigenvalues of \eqref{ptsym} under boundary condition \eqref{bd_cond}.

One can invert the asymptotic formulas in Theorem \ref{the_main} to obtain formulas for  $E_n$ in terms of $n$.
\begin{corollary}
 One can compute numbers $d_j( {a},\theta)$ explicitly such that
\begin{equation}\label{somm_eq}
E_n\underset{n\to+\infty}{=}\sum_{j=0}^{m+1}d_j( {a},\theta)\cdot
n^{\frac{2m}{m+2}\left(1-\frac{j}{m}\right)}+O\left(n^{-\frac{4}{m+2}}\right).
\end{equation}
\end{corollary}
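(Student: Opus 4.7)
The plan is to invert the asymptotic expansion of Theorem \ref{the_main} by a straightforward bootstrap (formal power-series reversion). The first step is the change of variable $u := E_n^{1/(2m)}$, which turns each $E_n^{1/2+(1-j)/m}$ into $u^{m+2-2j}$ and $E_n^{-1/2}$ into $u^{-m}$, so that Theorem \ref{the_main} takes the Laurent-polynomial form
\begin{equation*}
n + c \;=\; c_0(a)\,u^{m+2} + c_1(a)\,u^{m} + \cdots + c_{m+1}(a)\,u^{-m} + \pi^{-1}(\cot\theta)\,u^{-m} + O\bigl(u^{-(m+2)}\bigr),
\end{equation*}
with $c = 1/4$ in the Robin case \eqref{N_boundary} and $c = 3/4$ in the Dirichlet case \eqref{D_boundary} (the $\cot\theta$-term being absent).

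Next, the leading balance $c_0(a)\,u^{m+2}\approx n$ dictates the ansatz $u = n^{1/(m+2)}\,w$, with $w=O(1)$. Dividing the displayed equation by $c_0(a)\,n$ converts it into
\begin{equation*}
w^{m+2} \;=\; \frac{1}{c_0(a)}\Bigl(1+\frac{c}{n}\Bigr) - \sum_{j=1}^{m+1}\frac{c_j(a)}{c_0(a)}\,n^{-2j/(m+2)}\,w^{m+2-2j} - \frac{\pi^{-1}\cot\theta}{c_0(a)}\,n^{-(2m+2)/(m+2)}\,w^{-m} + O\bigl(n^{-2}\bigr).
\end{equation*}
Starting from $w_0 = c_0(a)^{-1/(m+2)}$ and substituting successively into the right-hand side, each iteration absorbs one more correction in the Puiseux grid $\{n^{-2k/(m+2)}\}$. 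After finitely many iterations the approximation stabilizes to accuracy $O(n^{-2})$, yielding $w$ as a finite Puiseux polynomial in $n^{-2/(m+2)}$ whose coefficients are explicit polynomials in $c_0(a)^{-1/(m+2)}$, the $c_j(a)$, $c$, and $\cot\theta$.

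The final step is to raise to the $2m$-th power: $E_n = u^{2m} = n^{2m/(m+2)}\,w^{2m}$. A multinomial expansion of $w^{2m}$ then produces
\begin{equation*}
E_n \;=\; \sum_{j=0}^{m+1} d_j(a,\theta)\,n^{(2m/(m+2))(1-j/m)} + O\bigl(n^{2m/(m+2)-2}\bigr),
\end{equation*}
and the error exponent simplifies via $2m/(m+2) - 2 = -4/(m+2)$, as required. The error bound itself follows from the standard sensitivity $dn/du \sim c_0(a)(m+2)\,u^{m+1}$: an $O(u^{-(m+2)})$ remainder in the $n$-equation induces an $O(u^{-(2m+3)})$ perturbation of $u$, and hence an $O(u^{2m-1}\cdot u^{-(2m+3)}) = O(u^{-4}) = O(n^{-4/(m+2)})$ perturbation of $E_n$. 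The main obstacle is purely computational bookkeeping of the successive substitutions and the multinomial expansion; no genuinely new ideas beyond Theorem \ref{the_main} are required.
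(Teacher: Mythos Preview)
Your proposal is correct and follows exactly the approach the paper takes: the paper's proof simply states that \eqref{N_boundary} and \eqref{D_boundary} are asymptotic equations that can be solved for $E_n$ and defers the mechanics to \cite[\S 5]{Shin2}, whereas you have spelled out the series reversion explicitly via $u=E_n^{1/(2m)}$ and bootstrap. One minor caveat: your claimed Puiseux grid $\{n^{-2k/(m+2)}\}$ is exact only when $m$ is even, since for $m$ odd the term $c/n=n^{-(m+2)/(m+2)}$ lies off that lattice; as the Remark following the Corollary and equation \eqref{mono} in the proof of Theorem~\ref{the_main} indicate, it is cleaner to perform the reversion in powers of $N:=n+c$ first and only afterward re-expand in $n$.
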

\begin{proof}
Equations \eqref{N_boundary} and \eqref{D_boundary} are  asymptotic equations and they can be solved for $E_n$,
resulting in \eqref{somm_eq}. For details  see for example \cite[\S 5]{Shin2}.
\end{proof}
\begin{remark}
{\rm  When one inverts the Robin asymptotics \eqref{N_boundary}, one usually writes the result in terms of powers
of $\left(n+\frac{1}{4}\right)$. Note that using the generalized binomial expansions, these powers of
$\left(n+\frac{1}{4}\right)$ can be expanded to fit in \eqref{somm_eq} and vice versa. The reason we have
\eqref{somm_eq} in the present form is to treat both the Robin and Dirichlet asymptotics as one.  }
\end{remark}

In next corollary, we provide an asymptotic formula for nearest neighbor spacing of eigenvalues. Hence we deduce
that large eigenvalues increase monotonically in magnitude and have their argument approaching zero.
\begin{corollary}\label{monoton}
The space between successive eigenvalues is
\begin{equation}
 E_{n+1}-E_n
\underset{n\to+\infty}{=}d \cdot n^{\frac{m-2}{m+2}}+o\left(n^{\frac{m-2}{m+2}}\right),
\end{equation}
where
\begin{equation}
d=\frac{2m}{m+2}\left(\frac{2\pi}{B\left(\frac{1}{2},1+\frac{1}{m}\right)}\right)^{\frac{2m}{m+2}}.
\end{equation}
In particular, $\lim_{n\to\infty}\left|E_{n+1}-E_n\right|=\infty$ and $\lim_{n\to+\infty}\arg(E_n)=0.$ Hence:

\begin{equation}\nonumber
\left|E_n\right|<\left|E_{n+1}\right|\,\,\,\text{for all large $n$}.
\end{equation}
\end{corollary}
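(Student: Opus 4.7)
\medskip
\noindent\textbf{Proof proposal.} The plan is to start from the inverted asymptotic
\[
E_n = \sum_{j=0}^{m+1} d_j(a,\theta)\, n^{\alpha_j} + O\bigl(n^{-4/(m+2)}\bigr), \qquad \alpha_j := \tfrac{2m}{m+2}\bigl(1-\tfrac{j}{m}\bigr),
\]
supplied by the preceding corollary, and to compute $E_{n+1}-E_n$ term by term. Reading off the leading relation $c_0(a)\,E_n^{1/2+1/m}\sim n$ in Theorem \ref{the_main} identifies
\[
d_0 = \Bigl(\tfrac{2\pi}{B(1/2,\,1+1/m)}\Bigr)^{2m/(m+2)},
\]
a positive real number independent of $a$ and $\theta$. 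The only analytic input needed beyond the corollary is the Taylor expansion $(n+1)^{\alpha}-n^{\alpha}=\alpha\, n^{\alpha-1}+O(n^{\alpha-2})$.

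Applying this with $\alpha=\alpha_0=\tfrac{2m}{m+2}$ to the $j=0$ summand yields the principal contribution $d_0\,\alpha_0\, n^{\alpha_0-1}=d\, n^{(m-2)/(m+2)}$, matching the stated leading term (note $\alpha_0-1=(m-2)/(m+2)$). For $j\geq 1$ one has $\alpha_j<\alpha_0$, so the corresponding differences are $O(n^{\alpha_j-1})=o(n^{(m-2)/(m+2)})$. The remainder $O(n^{-4/(m+2)})$ contributes at most $O(n^{-4/(m+2)})$, which is also $o(n^{(m-2)/(m+2)})$ since $-\tfrac{4}{m+2}<\tfrac{m-2}{m+2}$ for every $m\geq 3$. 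Summing gives the claimed expansion of $E_{n+1}-E_n$.

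The three listed consequences then follow at once. Since $d>0$ is real and $(m-2)/(m+2)>0$, $|E_{n+1}-E_n|\to\infty$. From $E_n/n^{\alpha_0}\to d_0\in(0,\infty)$ one deduces $\arg(E_n)\to\arg(d_0)=0$ and $|E_n|\sim d_0 n^{\alpha_0}$; since the comparison function $n\mapsto d_0 n^{\alpha_0}$ is strictly increasing with consecutive differences tending to $+\infty$, the equivalence forces $|E_n|<|E_{n+1}|$ for all sufficiently large $n$.

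The argument is bookkeeping, so no serious obstacle appears. The only points worth explicit verification are the strict inequalities $\alpha_j-1<\alpha_0-1$ for $j\geq 1$ and $-4/(m+2)<(m-2)/(m+2)$, both of which are immediate for $m\geq 3$; and the fact that $d_0$ (as opposed to the other $d_j$) is real and positive, which is what lets us conclude both the monotonicity of $|E_n|$ and $\arg(E_n)\to 0$.
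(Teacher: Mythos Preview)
Your approach matches the paper's: both invert the asymptotic relation of Theorem~\ref{the_main}, obtain the expansion $E_n=\sum_j d_j\,n^{\alpha_j}+O(n^{-4/(m+2)})$ with $d_0=\bigl(2\pi/B(\tfrac12,1+\tfrac1m)\bigr)^{2m/(m+2)}>0$, and subtract using $(n+1)^{\alpha}-n^{\alpha}=\alpha n^{\alpha-1}+O(n^{\alpha-2})$. The identification of $d=d_0\cdot\tfrac{2m}{m+2}$ and the handling of the lower-order terms are correct.

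There is one genuine slip in your last paragraph. You write that $|E_n|\sim d_0 n^{\alpha_0}$ together with the monotonicity of $n\mapsto d_0 n^{\alpha_0}$ ``forces $|E_n|<|E_{n+1}|$.'' That inference is false: asymptotic equivalence alone does not imply eventual monotonicity (e.g.\ $a_n=n^{2}+(-1)^n n$ satisfies $a_n\sim n^2$ but is not eventually increasing). The correct route, which the paper also takes, is to combine the two facts you have already proved: $\arg(E_n)\to 0$ and $E_{n+1}-E_n= d\,n^{(m-2)/(m+2)}(1+o(1))$ with $d>0$. Writing $\Delta_n=E_{n+1}-E_n$, one has
\[
|E_{n+1}|^2-|E_n|^2=2\,\Re\bigl(\overline{E_n}\,\Delta_n\bigr)+|\Delta_n|^2,
\]
and since $\arg(E_n)\to 0$ and $\arg(\Delta_n)\to 0$, the first term is $\sim 2|E_n|\,|\Delta_n|>0$ for large $n$. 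This is a two-line fix using ingredients you already have; the rest of your argument is fine.
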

\begin{proof}
 These claims are consequences of \eqref{mono} and \eqref{mono2} with $E_n=-\lambda_n$ in the proof of Theorem \ref{the_main}.
Note that in that proof,
\begin{equation}
d=(c_0)^{-\frac{2m}{m+2}}\cdot{\frac{2m}{m+2}\choose 1}.
\end{equation}
\end{proof}

The next corollary shows that one can recover the polynomial potential and boundary condition from the asymptotic
formula for the eigenvalues.
\begin{corollary}\label{cor5}
   Suppose  $\sin\theta\not=0$ and
\begin{equation}\label{bohr_eq}
\sum_{j=0}^{m+1}c_j^*
E_n^{\frac{1}{2}+\frac{1-j}{m}}+O\left(E_n^{-\frac{1}{2}-\frac{1}{m}}\right)\underset{n\to+\infty}{=}\left(n+\frac{1}{4}\right)
\end{equation}
 for some $c_j^*\in\C$.  Then one can recover the polynomial potential and boundary condition.
\end{corollary}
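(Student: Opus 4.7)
The plan is to match the hypothesized expansion \eqref{bohr_eq} coefficient-by-coefficient against the expansion \eqref{N_boundary} of Theorem \ref{the_main}, and then invert the resulting triangular system to recover $a=(a_1,\dots,a_m)$ and $\cot\theta$.

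First I would observe that the powers $E_n^{1/2+(1-j)/m}$ for $j=0,1,\dots,m+1$ form a strictly decreasing sequence (consecutive exponents differ by $1/m$), the final one being $E_n^{-1/2}$, which also matches the order of the Robin contribution $\pi^{-1}(\cot\theta)E_n^{-1/2}$, and all of these dominate the remainder $O(E_n^{-1/2-1/m})$. Subtracting \eqref{bohr_eq} from \eqref{N_boundary} gives a finite linear combination of these $m+2$ strictly separated powers that is $O(E_n^{-1/2-1/m})$ along the sequence $E_n\to\infty$; by standard uniqueness of asymptotic series, every coefficient must vanish. This yields
\begin{equation*}
c_0^*=c_0,\qquad c_j^*=c_j(a)\ \ (j=1,\dots,m),\qquad c_{m+1}^*=c_{m+1}(a)+\pi^{-1}\cot\theta.
\end{equation*}

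Next I would recover $a_1,\dots,a_m$ from the middle $m$ identities. The content of the explicit definition \eqref{c_def} is the triangular structure: $c_j(a)$ depends only on $a_1,\dots,a_j$, and $a_j$ enters affinely with a nonzero leading coefficient (this is the standard feature of expansions produced by iteratively integrating $\sqrt{E-x^m-P(x)}$ against its perturbative corrections). Given this, one solves $c_1^*=c_1(a_1)$ for $a_1$, then $c_2^*=c_2(a_1,a_2)$ for $a_2$, and so on, uniquely determining $a$. The leading identity $c_0^*=(2\pi)^{-1}B(1/2,1+1/m)$ is a universal consistency check and carries no new information about $a$. With $a$ now known, the last equation
\begin{equation*}
\cot\theta=\pi\bigl(c_{m+1}^*-c_{m+1}(a)\bigr)
\end{equation*}
pins down the boundary condition \eqref{bd_cond}, which depends on $\theta$ only through the ratio $\cos\theta:\sin\theta$, i.e.\ through $\cot\theta$ (recall $\sin\theta\neq0$).

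The main obstacle is justifying the triangular, affine-in-$a_j$ structure of the $c_j(a)$. Everything rests on \eqref{c_def}: once one verifies that $c_j(a)$ involves no $a_i$ with $i>j$ and has a nonvanishing coefficient of $a_j$, the sequential inversion above is immediate and the inverse problem is solved. If that structure failed — for instance, if $c_j$ depended on later $a_i$'s, or if the coefficient of $a_j$ in $c_j$ vanished for some $j$ — the recovery step would break down and one would need to invoke a higher-order Jacobian argument or produce additional spectral data from finer error terms (the refinement mentioned at the end of Section \ref{main_sect}).
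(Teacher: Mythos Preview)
Your proposal is correct and follows essentially the same approach as the paper: both arguments identify the coefficients $c_j^*$ with $c_j(a)$ (and $c_{m+1}^*$ with $c_{m+1}(a)+\pi^{-1}\cot\theta$), then invoke the triangular, affine-in-$a_j$ structure of the $c_j(a)$ to recover $a_1,\dots,a_m$ sequentially and finally read off $\cot\theta$. The paper states the three structural properties of $c_j(a)$ up front (referring to the Remark at the end of Section~\ref{main_thm_sec} for their justification via \eqref{Kmj=} and the definition of $b_{j,k}(a)$) and leaves the coefficient-matching step implicit, whereas you spell out the matching via uniqueness of asymptotic expansions; otherwise the arguments coincide.
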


\begin{proof}
The coefficients $c_j( {a})$ in Theorem \ref{the_main} have the following properties (see Remark at the end of
Section \ref{main_thm_sec}).
\begin{itemize}
\item[]{(i)} The $c_j( {a})$ are all polynomials in terms of the coefficients $ {a}$ of $P(x)$.
\item[]{(ii)} The coefficient $c_0( {a})$ does not depend on $ {a}$ (it is a constant).
\item[]{(iii)} For $1\leq j\leq m$, the polynomial $c_j( {a})$ depends only on   $a_1,\,a_2,\dots,a_{j}$. Further, it is a non-constant linear function of $a_j$.
\end{itemize}
We will use induction on $j$. Since $c_1( {a})$ is a  non-constant linear function of $a_1$, $c_1( {a})=c_1^*$
determines $a_1$.  Suppose that for $1\leq j\leq m-1$, the $c_1^*,  c_2^*,\dots,c_j^*$ uniquely determine $a_1,\,
a_2,\dots, a_j$. Then by (iii), $c_{j+1}( {a})=c_{j+1}^*$ determines $a_{j+1}$ uniquely.  These coefficients $
{a}$ of $P(x)$ together with the $c_{m+1}^*$-term then determine $\cot (\theta)$, which determines the boundary
condition \eqref{bd_cond}.
\end{proof}

\begin{remark}
{\rm Statements similar to the last Corollary hold when $\sin\theta=0$. That is, the first $(m+2)$-terms of the
asymptotics and boundary condition will determine the polynomial potential.}
\end{remark}

\begin{corollary}
The following statements are equivalent:
\begin{itemize}
\item[]{(i)} There are infinitely many real eigenvalues.
\item[]{(ii)} All eigenvalues are real.
\item[]{(iii)} The polynomial $P$ is real-valued and $\theta\in\R$.
\end{itemize}
\end{corollary}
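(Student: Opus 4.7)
The strategy is to close the cycle (iii)$\Rightarrow$(ii)$\Rightarrow$(i)$\Rightarrow$(iii). The first two implications are quick. For (iii)$\Rightarrow$(ii), if the polynomial $P$ is real-valued and $\theta\in\R$, then \eqref{ptsym}--\eqref{bd_cond} describes a self-adjoint Sturm-Liouville problem on $L^2[0,\infty)$ with a real potential growing to $+\infty$, so every eigenvalue is real. For (ii)$\Rightarrow$(i) I would cite Sibuya's result that the eigenvalue set is infinite.

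The interesting direction is (i)$\Rightarrow$(iii), which I would attack using Theorem \ref{the_main} together with Corollary \ref{monoton}. The corollary gives $|E_n|\to\infty$ and $\arg E_n\to 0$, so any real eigenvalue is eventually a positive real number. Under hypothesis (i), fix a subsequence $E_{n_k}\to+\infty$ of positive real eigenvalues. Along this subsequence every power $E_{n_k}^{\alpha}$ with $\alpha\in\R$ is positive real, while the right-hand sides in \eqref{N_boundary}--\eqref{D_boundary} are real. Taking imaginary parts of the asymptotic identity kills the right-hand side and leaves
\[
\sum_{j=0}^{m}(\Im c_j(a))\,E_{n_k}^{\frac{1}{2}+\frac{1-j}{m}}+\bigl(\Im c_{m+1}(a)+\pi^{-1}\Im\cot\theta\bigr)\,E_{n_k}^{-\frac{1}{2}}=O\bigl(E_{n_k}^{-\frac{1}{2}-\frac{1}{m}}\bigr),
\]
with the $\cot\theta$ contribution simply absent in the Dirichlet case.

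Because the exponents are distinct and spaced by $1/m$ above the error exponent, a standard peeling argument---divide by the largest power, let $k\to\infty$, subtract off that leading term, divide by the next power, and iterate---forces $\Im c_j(a)=0$ for $j=0,\dots,m$ together with $\Im c_{m+1}(a)+\pi^{-1}\Im\cot\theta=0$. To convert these vanishing conditions into $a\in\R^m$ and $\theta\in\R$, I would invoke the structural properties of the $c_j(a)$ recorded in the proof of Corollary \ref{cor5}: each $c_j(a)$ is a polynomial in $a_1,\dots,a_j$ with \emph{real} coefficients, linear and non-constant in $a_j$. Induction on $j$ then successively yields $a_1,\dots,a_m\in\R$; once all $a_j$ are real, $c_{m+1}(a)$ is real, so $\cot\theta\in\R$, and since $\Im\cot(x+iy)$ vanishes exactly when $y=0$, we obtain $\theta\in\R$. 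The main obstacle I anticipate is confirming from \eqref{c_def} that the polynomials $c_j(a)$ really have real coefficients; this should be transparent from the construction in Section \ref{main_thm_sec}, but requires a brief verification to make the induction rigorous.
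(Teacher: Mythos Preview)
Your proposal is correct and is essentially the same argument as the paper's; the paper simply phrases $(i)\Rightarrow(iii)$ as the contrapositive (``if $P$ is not real or $\theta\notin\R$, then Theorem~\ref{the_main} forces all but finitely many $E_n$ to be non-real'') and suppresses the peeling details that you spell out. The obstacle you flag is the only genuine verification needed, and it holds: from \eqref{c_def} and \eqref{Kmj=} the $c_j(a)$ are built from the real constants $K_{m,j,k}$ and the $b_{j,k}(a)$, which are coefficients of $\binom{1/2}{k}P(z)^k$ and hence polynomials in $a$ with real coefficients.
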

\begin{proof}
$(i)\Rightarrow (iii):$ We will prove its contrapositive.  Suppose that either $P$ is not real-valued or
$\theta\not\in\R$. Then from Theorem \ref{the_main}, one can see that all but finitely many eigenvalues are
non-real. Thus, there are at most finitely many real eigenvalues. $(iii)\Rightarrow (ii):$ The differential
operator is self-adjoint. $(ii) \Rightarrow (i):$ There are infinitely many eigenvalues.
\end{proof}

\section{\bf Work of Sibuya}\label{sec2}
In this section, we will introduce some results of Sibuya and outline his proof of a result that we will use to
improve certain error terms.

Sibuya \cite{Sibuya} studied the following equation in the complex
$z$-plane.
\begin{align}\label{complex-1}
-y^{\dd}+\left(z^m+P(z)+\lambda\right)y=0,
\end{align}
that is \eqref{ptsym}, extended to the complex $z$-plane with $\lambda=-E$. Before we introduce the Sibuya's
results, we provide some notations first. We will use $ {a}=(a_1, a_2,\dots,a_m)\in\C^{m}$,
\begin{equation}
\text{$b_{j,k}( {a})$ is the coefficient of $z^{mk-j}$ in ${\frac{1}{2}\choose{k}}\left(P(z)\right)^k$\,\,for
$1\leq k\leq j$,\, and}
\end{equation}
\begin{equation}
b_j( {a})=\sum_{k=1}^j b_{j,k}( {a}),\quad j\in\N.
\end{equation}
We will also use $\mu( {a})=\frac{m}{4}-\nu( {a})$ where
\begin{eqnarray}
\nu( {a})=\left\{
              \begin{array}{rl}
              0 \quad &\text{if $m$ is odd,}\\
              b_{\frac{m}{2}+1}( {a}) \quad &\text{if $m$ is even.}
              \end{array}
                         \right. \nonumber
\end{eqnarray}

Sibuya's results that we need in this paper can be summarized as
follows.
\begin{theorem}\label{prop-2}
Equation \eqref{complex-1} admits a solution $f(z,  {a}, \lambda)$ with the following properties.
\begin{enumerate}
\item[(i)] $f(z,  {a}, \lambda)$ is an entire function of $z$, $a$,
and $\lambda$. \item[(ii)] $f(z,  {a}, \lambda)$ and $f^\d(z,  {a}, \lambda)=\frac{\partial}{\partial z}f(z,
 {a}, \lambda)$ admit the following asymptotic expansions:
\begin{align}
f(z, {a},\lambda)=&\,z^{\mu( {a})-\frac{m}{2}}(1+O(z^{-1/2}))\exp\left[-F(z, {a},\lambda)\right],\label{Si-eq1}\\
f^\d(z, {a},\lambda)=&-z^{\mu( {a})}(1+O(z^{-1/2}))\exp\left[-F(z, {a},\lambda) \right],\label{Si-eq2}
\end{align}
as $z$ tends to infinity in the sector $|\arg z|\leq \frac{3\pi}{m+2}-\varepsilon$, for any $\varepsilon>0$ fixed,
uniformly on each compact set of $( {a},\lambda)$-values. Here
\begin{equation}\nonumber
F(z, {a},\lambda)=\frac{2}{m+2}\,z^{\frac{m+2}{2}}+\sum_{1\leq j<\frac{m+2}{2}}\frac{2}{m+2-2j}b_j( {a})
z^{\frac{1}{2}(m+2-2j)}.
\end{equation}
\item[(iii)] For each fixed $\delta>0$, $f$ and $f^\d$ also admit
the asymptotic expansions,
\begin{align}
f(0,  {a}, \lambda)=&\quad [1+o(1)]\lambda^{-\frac{1}{4}}\exp\left[L( {a}, \lambda)\right],\label{Si-eq3}\\
f^\d(0,  {a}, \lambda)=&-[1+o(1)]\lambda^{\frac{1}{4}}\exp\left[L( {a}, \lambda)\right],\label{Si-eq4}
\end{align}
 as $\lambda\to\infty$ in the sector
 $|\arg(\lambda)|\leq\pi-\delta$,
 where
\begin{equation}\label{L_def}
L( {a},\lambda)=\int_0^{+\infty}\left(\sqrt{x^m+P(x)+\lambda}-
x^{\frac{m}{2}}-\sum_{j=1}^{\lfloor\frac{m+1}{2}\rfloor}b_j( {a})x^{\frac{m}{2}-j}-\frac{\nu(
{a})}{x+1}\right)\,dx
\end{equation}
\item[(iv)] The entire functions $f(0,  {a}, \lambda)$ and $f^\d(0,
 {a}, \lambda)$ have orders $\rho:=\frac{1}{2}+\frac{1}{m}$.
\end{enumerate}
\end{theorem}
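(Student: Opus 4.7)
My plan follows Sibuya's classical WKB-plus-integral-equation template. For parts (i) and (ii), I would start with the ansatz
\begin{equation*}
\tilde f(z)=z^{\mu(a)-m/2}\exp\bigl(-F(z,a,\lambda)\bigr)\Bigl(1+\sum_{k\ge 1}c_k(a,\lambda)\,z^{-k/2}\Bigr)
\end{equation*}
and substitute into \eqref{complex-1}. Matching like powers of $z$ determines $F$, $\mu$, and the $c_k$ uniquely: the polynomial inside $F$ is exactly the first $\lfloor (m+2)/2\rfloor$ terms of the formal large-$x$ expansion of $\sqrt{x^m+P(x)+\lambda}$, which is precisely why the $b_j(a)$ are defined via the binomial series for the square root and why $\mu(a)=m/4-\nu(a)$ absorbs the leading algebraic correction. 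To upgrade this formal series to a genuine solution, I would recast \eqref{complex-1} as a Volterra integral equation for $g(z):=f(z)\cdot z^{m/2-\mu(a)}e^{F(z,a,\lambda)}$, integrating along rays where $\Re F(z)\to +\infty$; these rays sweep out the sector $|\arg z|\le 3\pi/(m+2)-\varepsilon$, and a Picard iteration gives a unique $g=1+O(z^{-1/2})$ uniformly on compact parameter sets. Entireness of $f$ in $z$, $a$, $\lambda$ then follows from standard ODE theory for linear equations with polynomial coefficients, together with uniqueness of the subdominant solution normalized by its leading coefficient.

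For part (iii), I would connect the behavior at $z=\infty$ to values at $z=0$ via Liouville--Green along the positive real axis. For $|\arg\lambda|\le\pi-\delta$ and $|\lambda|$ large, $x^m+P(x)+\lambda$ is bounded away from zero on $x\ge 0$, so WKB is uniformly valid on $[0,R]$ for any large $R$. Using \eqref{Si-eq1} to evaluate $f(R,a,\lambda)$ and letting $R\to\infty$, the divergent parts of $\int_0^R\sqrt{x^m+P(x)+\lambda}\,dx$ cancel against $F(R,a,\lambda)$ term-by-term, leaving exactly the convergent integral $L(a,\lambda)$ in \eqref{L_def}. The $\nu(a)/(x+1)$ subtraction appears precisely when the formal square-root expansion contains an $x^{-1}$ term (i.e.\ when $m$ is even), to cancel a would-be logarithmic divergence; the choice $1/(x+1)$ rather than $1/x$ regularizes at the origin without affecting the tail behavior. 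The analogous computation starting from \eqref{Si-eq2} yields \eqref{Si-eq4}, with the sign flip coming from differentiating $e^{-F}$.

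For part (iv), I would estimate $L(a,\lambda)$ via the rescaling $x=|\lambda|^{1/m}t$, which converts $\sqrt{x^m+\lambda}-x^{m/2}$ into $|\lambda|^{1/2}\bigl(\sqrt{t^m+e^{i\arg\lambda}}-t^{m/2}\bigr)$; the $t$-integral converges (decay $t^{-m/2}$ at $\infty$) and stays nonzero on every ray $|\arg\lambda|\le\pi-\delta$, so $\Re L$ grows like $|\lambda|^{1/2+1/m}$ while the subtracted lower-order polynomial terms contribute only $O(|\lambda|^{1/2})$. Combined with \eqref{Si-eq3}--\eqref{Si-eq4}, this gives an $\exp(C|\lambda|^{1/2+1/m})$ upper bound throughout the open sector and a matching lower bound on some ray, and Phragm\'en--Lindel\"of extends the upper bound across the excluded neighborhood of the positive real axis, pinning the order at $\rho=1/2+1/m$. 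The step I expect to be most delicate is the uniformity in $(a,\lambda,z)$ of the Volterra fixed-point argument underlying part (ii): the contour must keep $\Re F$ monotone along the whole path out to $z$, uniformly as $|\arg z|$ approaches $3\pi/(m+2)$, and it is exactly this uniformity that makes the error $O(z^{-1/2})$ strong enough for the $R\to\infty$ limit in (iii) to be taken term-by-term and produce the clean convergent formula \eqref{L_def}.
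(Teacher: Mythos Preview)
Your strategy is sound in outline and close in spirit to the paper, but the paper (following Sibuya) organizes the proof of part~(iii) differently, and the difference lands exactly on the uniformity issue you flag at the end. The paper does not reprove (i), (ii), or (iv); those are cited from Sibuya's monograph. For (iii), instead of propagating from $z=R$ back to $z=0$ and then sending $R\to\infty$, Sibuya makes a single Liouville--Green change of variables on all of $[0,\infty)$: writing $Q=z^m+P+\lambda$, one sets
\[
\begin{pmatrix}u\\u'\end{pmatrix}=e^{-\int_0^z Q^{1/2}}\begin{pmatrix}Q^{-1/4}&Q^{-1/4}\\Q^{1/4}&-Q^{1/4}\end{pmatrix}\begin{pmatrix}1&-g\\g&1\end{pmatrix}\begin{pmatrix}w_1\\w_2\end{pmatrix},\qquad g=\frac{Q'}{8Q^{3/2}},
\]
and obtains Volterra integral equations for $(w_1,w_2)$ on $[x,\infty)$. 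The $\lambda$-uniformity is then a single estimate, $\int_0^\infty|s_{jk}(x,\lambda)|\,dx=o(1)$ as $\lambda\to\infty$ in the sector (the $s_{jk}$ are built from $g,g',Q'/4Q$), which forces $W_1=o(1)$, $W_2=1+o(1)$ uniformly in $x\ge 0$ and hence gives $U(0)=\lambda^{-1/4}(1+o(1))$, $U'(0)=-\lambda^{1/4}(1+o(1))$ directly. Since $U$ and $f$ are both subdominant in $S_0$ they are proportional, and the constant $\exp[L(a,\lambda)]$ drops out by comparing the regularized prefactor $\exp[-\int_0^\infty Q^{1/2}]$ with the normalization of $f$ at infinity.

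Your two-step route (WKB on $[0,R]$, then invoke (ii) at $z=R$, then $R\to\infty$) mixes two limits: the $O(z^{-1/2})$ in (ii) is stated uniform only on \emph{compact} $(a,\lambda)$-sets, so you cannot use it at $z=R$ while simultaneously letting $\lambda\to\infty$ without further justification. That is a genuine gap as written, though not an unfixable one---closing it amounts to re-deriving Sibuya's $\int_0^\infty|s_{jk}|$ estimate. What your route would buy, once patched, is a more transparent picture of $L(a,\lambda)$ as the renormalized limit $\lim_{R\to\infty}\bigl(\int_0^R Q^{1/2}-F(R)\bigr)$; the paper inherits this identity from Sibuya without re-deriving it.
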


\begin{remark}{\rm The Dirichlet and Neumann eigenvalues $E=-\lambda$ satisfy
$f(0, {a},\lambda)=0$ and $f^\d(0, {a},\lambda)=0$, respectively since $f(+\infty, {a},\lambda)=0$ for all
$\lambda$. In Lemma \ref{asy_lemma}, we will show that $L( {a},\lambda)=(const.)\lambda^{\rho}(1+o(1))$ as
$\lambda\to\infty$ in the sector $|\arg(\lambda)|\leq\pi-\delta$. Thus, there are at most finitely many such
eigenvalues in the sector. Since the orders of entire functions $\lambda\mapsto f(0, {a},\lambda)$ and
$\lambda\mapsto f^\d(0, {a},\lambda)$ are $\rho\in(0, 1)$, by the Hadamard factorization theorem, there are
infinitely many zeros of these functions  which lie in the sector $|\arg(\lambda)-\pi|\leq \delta$ (near the
negative real $\lambda$-axis) with at most finitely many exceptions.

Then using asymptotics of $f(0, {a},\lambda)$ and $f^\d(0, {a},\lambda)$ away from the negative real axis and a
functional equation \eqref{basic_eq}, we will get asymptotics of these functions near the negative real
$\lambda$-axis and use these asymptotics near the negative real $\lambda$-axis to eventually prove the main
result, Theorem \ref{the_main}.}
\end{remark}

We will improve these asymptotics of $f(0, {a},\lambda)$ and $f^\d(0, {a},\lambda)$ in Theorem \ref{prop-2}(iii),
but first we give an outline of Sibuya's proof \cite[Sect.\ 19]{Sibuya}.
\begin{proof}[Sibuya's proof of Theorem ~\ref{prop-2} (iii)]
Suppose that $u(z)=u(z, {a},\lambda)$ solves \eqref{complex-1}, that is,
\begin{equation}\label{sys}
-u^\dd+Q(z)u=0,\,\,\text{where $Q(z)=z^m+P(z)+\lambda$.}
\end{equation}
Or equivalently,
\begin{equation}\label{ch-vari}
\left(
\begin{matrix}
u(z)\\
u^\d(z)
\end{matrix}
\right)^\d = \left(
\begin{matrix}
0&1\\
Q(z)&0
\end{matrix}
\right) \left(
\begin{matrix}
u(z)\\
u^\d(z)
\end{matrix}
\right).
\end{equation}

Next, one defines functions $w_1,\,w_2$ through
\begin{equation}\label{ch-vari1}
\left(
\begin{matrix}
u(z)\\
u^\d(z)
\end{matrix}
\right) = \exp\left[-\int_0^zQ(t)^{\frac{1}{2}}dt\right]\left(
\begin{matrix}
Q(z)^{-\frac{1}{4}}&Q(z)^{-\frac{1}{4}}\\
Q(z)^{\frac{1}{4}}&-Q(z)^{\frac{1}{4}}
\end{matrix}
\right) \left(
\begin{matrix}
1&-g(z)\\
g(z) &1
\end{matrix}
\right) \left(
\begin{matrix}
w_1(z)\\
w_2(z)
\end{matrix}
\right).
\end{equation}
Let $z=x+0i$,\, $x\geq0$. Then from \eqref{ch-vari} and
\eqref{ch-vari1}, one gets
\begin{align}
w_1^\d(x,\lambda)&=2Q(x)^{\frac{1}{2}}w_1(x,\lambda) +s_{11}(x,\lambda)w_1(x,\lambda)+s_{12}(x,\lambda)w_2(x,\lambda),\label{trans-1}\\
w_2^\d(x,\lambda)&=\qquad\qquad\quad\quad\quad\,\,\,\,\
s_{21}(x,\lambda)w_1(x,\lambda)+s_{22}(x,\lambda)w_2(x,\lambda),\label{trans-2}
\end{align}
where
\begin{equation}\label{sij_def}
\left(
\begin{matrix}
s_{11}&s_{12}\\
s_{21}&s_{22}
\end{matrix}
\right)= \frac{1}{1+g^2} \left(
\begin{matrix}
hg-g^\d g&-hg^2+g^\d\\
-hg^2-g^\d&-hg-g^\d g
\end{matrix}
\right),\,\,
g=\frac{Q^\d(x)}{8Q^{\frac{3}{2}}(x)},\,\,\text{and}\,\,
h=\frac{Q^\d(x)}{4Q(x)}.
\end{equation}

 Next we consider the following integral equations transformed
from \eqref{trans-1} and \eqref{trans-2}:
\begin{align}
w_1(x,\lambda)&=-\int_x^{+\infty}\left[s_{11}(t,\lambda)w_1(t,\lambda)+s_{12}(t,\lambda)w_2(t,\lambda)\right]\exp\left[2\int_t^xQ(\sigma)^{1/2}d\sigma\right] dt,\label{int-1}\\
w_2(x,\lambda)&=1-\int_x^{+\infty}\left[s_{21}(t,\lambda)w_1(t,\lambda)+s_{22}(t,\lambda)w_2(t,\lambda)\right]\,dt.\label{int-2}
\end{align}
Sibuya \cite[Sect.\ 19]{Sibuya} showed first that for a given $0<\delta\leq\frac{\pi}{m+2}$ and for a compact set
of the coefficient vector $a\in\C^{m}$
 of $P(x)$, there exist positive numbers $\eta,\,M_0$ such that if $|\lambda|\geq M_0$
 and $|\arg(\lambda)|\leq\pi-\delta$,
 then $\Re\sqrt{Q(x)}=\Re\sqrt{x^m+P(x)+\lambda}\geq \eta\sqrt{|\lambda|}$ for  all $x\geq0$.
In order to show this, he examined $x^m+P(x)$ while the coefficient vector $a$ of $P(x)$ lies in a compact set of
$\C^{m}$. For a given $0<\delta\leq\frac{\pi}{m+2}$, there is an $x_0>0$ such that if $x\geq x_0$, then
$\left|\arg\left(x^m+P(x)\right)\right|=\left|\arg\left(1+\frac{a_1}{x}+\dots+\frac{a_{m}}{x^{m}}\right)\right|
\leq\frac{\delta}{2}$. Next, $x^m+P(x)$ for all $0\leq x\leq x_0$ lie in a disk, centered at $0$. Thus, there is
an $M_0$ such that if $|\lambda|\geq M_0$ and $\left|\arg(\lambda)\right|\leq\pi-\delta$, then
$\left|\arg(x^m+P(x)+\lambda)\right|\leq\pi-\frac{\delta}{2}$ and $\left|x^m+P(x)+\lambda\right| \geq
\frac{1}{2}\sin(\delta/2)|\lambda|$ for all $x\geq 0$ (see Fig.\ 19.2 in \cite{Sibuya}). Hence, there is an
$\eta>0$ such that if $|\lambda|\geq M_0$ and $|\arg(\lambda)|\leq\pi-\delta$,
 then $\Re\sqrt{x^m+P(x)+\lambda}\geq \eta\sqrt{|\lambda|}$ for all $x\geq0$.

Then Sibuya \cite[Lemma 19.1]{Sibuya} proved that among other things, there are solutions $W_1(x,\lambda)$ and
$W_2(x,\lambda)$ of \eqref{int-1} and \eqref{int-2} such that
\begin{equation}\label{asy_res}
W_1(x,\lambda)=o(1),\,W_2(x,\lambda)=1+o(1),
\end{equation}
\begin{itemize}
\item [(i)] as $\lambda\to\infty$ in the sector
$|\arg(\lambda)|\leq \pi-\delta$ uniformly for $0\leq x<+\infty$ and
\item [(ii)] as $x\to+\infty$ with $x\in\R$ uniformly for all large
enough $\lambda$ in the sector $|\arg(\lambda)|\leq \pi-\delta$.
\end{itemize}
To show \eqref{asy_res} Sibuya used the fact that
 $\int_0^{+\infty}\left|s_{jk}(x,\lambda)\right|\,dx=o(1)$ as $\lambda\to\infty$ in the sector $|\arg(\lambda)|\leq
\pi-\delta$. We will improve these estimates in Lemma \ref{estim}.

Then \eqref{asy_res} along with
\begin{equation}\label{asy_w12_eq}
\left(
\begin{matrix}
U(0, {a},\lambda)\\
U^\d(0, {a},\lambda)
\end{matrix}
\right) := \left(
\begin{matrix}
\lambda^{-1/4}&\lambda^{-1/4}\\
\lambda^{1/4}&-\lambda^{1/4}
\end{matrix}
\right) \left(
\begin{matrix}
1&-g(0,\lambda)\\
g(0,\lambda) &1
\end{matrix}
\right) \left(
\begin{matrix}
W_1(0, {a},\lambda)\\
W_2(0, {a},\lambda)
\end{matrix}
\right),
\end{equation}
yields the following asymptotics of $U(0, {a},\lambda)$ and $U^\d(0, {a},\lambda)$:
\begin{align}
U(0, {a},\lambda)=&\,\,\,\,\,\lambda^{-\frac{1}{4}}\left[1+o(1)\right],\label{eq-11}\\
U^\d(0, {a},\lambda)=&-\lambda^{\frac{1}{4}}\left[1+o(1)\right],\label{eq-21}
\end{align}
as $\lambda\to\infty$ in the sector $|\arg(\lambda)|\leq\pi-\delta$. Since both $f$ and $U$ are solutions of the
same equation \eqref{complex-1} and since they both decay to zero as $x\to+\infty$, they are linearly dependent
\cite[\S 7.4]{Hille}. Moreover, Sibuya \cite[Sect.\ 19]{Sibuya} showed that
\begin{equation}\label{fu_rel}
f(z, {a},\lambda)=\exp\left[L( {a},\lambda)\right]U(z, {a},\lambda),
\end{equation}
where $L( {a},\lambda)$ is defined by \eqref{L_def}.  Then \eqref{fu_rel}, \eqref{eq-11}, and \eqref{eq-21}  prove
Theorem \ref{prop-2}(iii).
\end{proof}


\section{\bf Improving Theorem \ref{prop-2}(iii)}\label{main_sect}
In this section, we will prove the following theorem, improving Theorem \ref{prop-2}(iii).
\begin{theorem}\label{improv}
Let $0<\delta\leq\frac{\pi}{m+2}$. Then
\begin{align}
f(0, {a},\lambda)=&\,\,\,\,\,\left[1+O\left(\lambda^{-\rho}\right)\right]\lambda^{-\frac{1}{4}}\exp\left[L( {a},\lambda)\right],\label{eq-1}\\
f^\d(0, {a},\lambda)=&-\left[1+O\left(\lambda^{-\rho}\right)\right]\lambda^{\frac{1}{4}}\exp\left[L(
{a},\lambda)\right],\label{eq-2}
\end{align}
 as $\lambda\to\infty$ in the sector $|\arg(\lambda)|\leq\pi-\delta$, uniformly on
each compact set of $a\in\C^{m}$.
\end{theorem}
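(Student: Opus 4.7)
\medskip

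\noindent\textbf{Proof proposal.} The plan is to revisit the Sibuya argument outlined above and sharpen one, and only one, estimate: the bound on $\int_0^{+\infty}|s_{jk}(x,\lambda)|\,dx$. Since the identity \eqref{fu_rel} and the representation \eqref{asy_w12_eq} are already set up, an improvement in the $s_{jk}$-integrals will propagate through the successive approximation for \eqref{int-1}--\eqref{int-2} and yield \eqref{eq-1}--\eqref{eq-2} with the claimed remainder $O(\lambda^{-\rho})$, $\rho=\tfrac12+\tfrac1m$.

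\smallskip

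\noindent\emph{Step 1 (the key lemma).} I would prove that, for $|\arg\lambda|\le \pi-\delta$ and $ {a}$ in a compact set of $\C^m$,
\begin{equation}\nonumber
\int_0^{+\infty}|s_{jk}(x,\lambda)|\,dx = O\!\left(|\lambda|^{-\rho}\right),\qquad 1\le j,k\le 2.
\end{equation}
Splitting $[0,+\infty)$ at the turning scale $x_\ast:=|\lambda|^{1/m}$ is the natural device. From \eqref{sij_def} all entries of $(s_{jk})$ are polynomials in $g,g',h$ divided by $1+g^2$, and in the sector one has $|Q(x)|\gtrsim|\lambda|$ and $|1+g^2|\ge\tfrac12$ once $|\lambda|$ is large. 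On $[0,x_\ast]$ one uses $|Q|\sim|\lambda|$, $|Q'|\lesssim x^{m-1}+1$, $|Q''|\lesssim x^{m-2}+1$, so that
\[
|g|\lesssim \frac{x^{m-1}+1}{|\lambda|^{3/2}},\qquad |h|\lesssim\frac{x^{m-1}+1}{|\lambda|},\qquad |g'|\lesssim\frac{x^{m-2}+1}{|\lambda|^{3/2}}+\frac{(x^{m-1}+1)^2}{|\lambda|^{5/2}},
\]
and a direct integration up to $x_\ast=|\lambda|^{1/m}$ gives the bound $|\lambda|^{-1/2-1/m}=|\lambda|^{-\rho}$. On $[x_\ast,+\infty)$ one uses $|Q|\sim x^m$, which yields $|g|\lesssim x^{-m/2-1}$, $|g'|\lesssim x^{-m/2-2}$, $|h|\lesssim x^{-1}$, whence each entry of $(s_{jk})$ is $O(x^{-m/2-2})$ and integrating from $x_\ast$ to $+\infty$ again produces $x_\ast^{-m/2-1}\asymp|\lambda|^{-\rho}$.

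\smallskip

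\noindent\emph{Step 2 (contraction mapping).} I would then run the successive-approximation argument for the system \eqref{int-1}--\eqref{int-2} as in \cite[Sect.~19]{Sibuya}, but using the sharper $L^1$-bound of Step~1. The kernel of \eqref{int-1} carries the damping factor $\exp[2\int_t^x Q^{1/2}\,d\sigma]$, whose modulus is $\le 1$ for $0\le x\le t$ thanks to $\Re\sqrt{Q}\ge \eta\sqrt{|\lambda|}$. Standard Gronwall/Neumann-series estimates then give
\begin{equation}\nonumber
W_1(x,\lambda)=O\!\left(|\lambda|^{-\rho}\right),\qquad W_2(x,\lambda)=1+O\!\left(|\lambda|^{-\rho}\right),
\end{equation}
uniformly in $x\ge 0$ and in compact $ {a}$-sets.

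\smallskip

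\noindent\emph{Step 3 (assembly).} Multiplying out \eqref{asy_w12_eq} gives
\[
U(0, {a},\lambda)=\lambda^{-1/4}\bigl[(1+g(0,\lambda))W_1(0)+(1-g(0,\lambda))W_2(0)\bigr],
\]
and similarly for $U'(0, {a},\lambda)$. Since $g(0,\lambda)=P'(0)/(8(P(0)+\lambda)^{3/2})=O(|\lambda|^{-3/2})$, which is far better than $|\lambda|^{-\rho}$, the estimates of Step~2 immediately upgrade \eqref{eq-11}--\eqref{eq-21} to
\[
U(0, {a},\lambda)=\lambda^{-1/4}\bigl[1+O(|\lambda|^{-\rho})\bigr],\qquad U'(0, {a},\lambda)=-\lambda^{1/4}\bigl[1+O(|\lambda|^{-\rho})\bigr].
\]
Combining with $f=e^{L( {a},\lambda)}U$ from \eqref{fu_rel} yields \eqref{eq-1}--\eqref{eq-2}.

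\smallskip

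\noindent\emph{Expected obstacle.} All the structural work has been done by Sibuya; the only quantitative content is Step~1. The subtlety there is the joint dependence on the coefficients $ {a}$ and on the sector $|\arg\lambda|\le\pi-\delta$: one has to check that the implicit constants in $|Q|\gtrsim|\lambda|$ and in the $x$-derivative estimates are uniform on compact $ {a}$-sets, and one has to verify that the crossover scale $x_\ast=|\lambda|^{1/m}$ gives the same rate $|\lambda|^{-\rho}$ from both sides (which is precisely where the exponent $\rho=\tfrac12+\tfrac1m$ enters geometrically). Once that balance is in place, the rest is bookkeeping, and the same splitting method would in principle permit any desired power of $|\lambda|^{-1/m}$ in the remainder by iterating the argument, as hinted in the introduction.
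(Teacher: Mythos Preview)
Your proposal is correct and follows essentially the same route as the paper: the key step is showing $\int_0^\infty|s_{jk}(x,\lambda)|\,dx=O(|\lambda|^{-\rho})$, and then feeding this into \eqref{int-1}--\eqref{int-2}, \eqref{asy_w12_eq}, and \eqref{fu_rel}. The only cosmetic differences are that the paper handles Step~1 by the single substitution $x=|\lambda|^{1/m}t$ rather than splitting at $x_\ast$, and in Step~2 it simply bootstraps once off Sibuya's existing bounds $|W_1|\le 1$, $|W_2|\le 2$ (from \eqref{asy_res}) rather than rerunning the full contraction argument.
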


\begin{proof}
We will prove this by improving errors $o(1)$ in \eqref{asy_res} to $O\left(\lambda^{-\rho}\right)$. In doing so, we will use better
estimates on integrations of $s_{ij}$ in \eqref{sij_def}   over the interval $[0,+\infty)$.

We first examine the integrands. Let $\gamma:=\arg(\lambda)\in(-\pi+\delta,\,\pi-\delta)$. Then for $0\leq
x<+\infty$,
\begin{equation}\nonumber
|x^m+\lambda|^2=x^{2m}+|\lambda|^2+2|\lambda|x^m\cos\gamma.
\end{equation}
So if $|\gamma|\leq\frac{\pi}{2}$, then $|x^m+\lambda|^2\geq
x^{2m}+|\lambda|^2$. And if $\frac{\pi}{2}<|\gamma|\leq\pi-\delta$,
then $\cos\gamma<0$ and
\begin{align}
|x^m+\lambda|^2&=x^{2m}+|\lambda|^2+2|\lambda|x^m\cos\gamma\nonumber\\
&\geq (1+\cos\gamma)\left(|\lambda|^2+x^{2m}\right)\nonumber\\
&\geq
\cos^2\left(\frac{\pi-\delta}{2}\right)\left(|\lambda|+x^{m}\right)^2,\nonumber
\end{align}
where we used $1+\cos\gamma=2\cos^2\left(\frac{\gamma}{2}\right)$
and $\alpha^2+\beta^2\geq\frac{1}{2}(\alpha+\beta)^2$ for
$\alpha,\beta\geq0$.

Also, since the coefficient vector $a$ lies in a fixed compact set, for all large enough $\lambda$ in the sector
$|\arg(\lambda)|\leq \pi-\delta$,
\begin{align}
\left|\frac{a_1x^{m-1}+\dots+a_m}{x^m+\lambda}\right|\leq&\frac{1}{2},&\nonumber\\
|Q(x)|\geq|x^m+\lambda|\left(1-\left|\frac{a_1x^{m-1}+\dots+a_{m}}{x^m+\lambda}\right|\right)&\geq\frac{1}{2}|x^m+\lambda|
\quad\text{for all $x\geq0$.}\nonumber
\end{align}
And for some $A_1,\,A_2>0$,
\begin{align}
|g(x)|&=\left|\frac{Q^\d(x)}{8Q^{3/2}(x)}\right|\leq A_1\frac{x^{m-1}+1}{\left(x^m+|\lambda|\right)^{\frac{3}{2}}},\nonumber\\
|h(x)|&=\left|\frac{Q^\d(x)}{4Q(x)}\right|\leq
A_2\frac{x^{m-1}+1}{x^m+|\lambda|}.
\end{align}

Moreover, $\sup_{x\geq0}|g(x,\lambda)|\to0$ as $\lambda\to\infty$ in the sector $|\arg(\lambda)|\leq \pi-\delta$
and use this to estimate $\left|1+g^2\right|^{-1}\leq2$ for all large enough $\lambda$ in the sector
$|\arg(\lambda)|\leq \pi-\delta$. So
\begin{equation}
\int_0^{+\infty}\left|h(x,\lambda)g(x,\lambda)\right|\,dx=O\left(\int_0^{+\infty}\left|\frac{x^{2m-2}}{\left(x^m+|\lambda|\right)^{\frac{5}{2}}}\right|\,dx\right)=O\left(\lambda^{-\rho}\right),
\end{equation}
substituting $x=|\lambda|^{\frac{1}{m}}t$ in the last step. We can apply the same argument  to $g^\d, hg^2, g^\d
g$ to prove the following. On any fixed compact set of $a\in\C^{m}$,
\begin{equation}\label{estim}
\int_0^{+\infty}\left|s_{jk}(x,\lambda)\right|\,dx=O\left(\lambda^{-\rho}\right),\quad j, k = 1, 2,
\end{equation}
as $\lambda\to\infty$ in the sector $|\arg(\lambda)|\leq \pi-\delta$, where $\rho=\frac{1}{2}+\frac{1}{m}$.

Then from \eqref{asy_res}, we  obtain that $\sup_{x\geq0}|W_1(x,\lambda)|\leq 1$ and
$\sup_{x\geq0}|W_2(x,\lambda)|\leq 2$ for all large enough $\lambda$ in the sector $|\arg(\lambda)|\leq
\pi-\delta$. Thus, from \eqref{int-1} and \eqref{int-2}, and the fact that $\Re\left(\sqrt{Q(x)}\right)\geq 0$ for
all $x\geq0$ for all large $\lambda$ in the sector $|\arg(\lambda)|\leq\pi-\delta$, one can show that
\begin{align}
|W_1(0,\lambda)|&\leq\int_0^{+\infty}\left(|s_{11}(t,\lambda)|+2|s_{12}(t,\lambda)|\right)\,dt=O\left(\lambda^{-\rho}\right),\nonumber\\
|W_2(0,\lambda)|&\leq1+\int_0^{+\infty}\left(|s_{21}(t,\lambda)|+2|s_{22}(t,\lambda)|\right)\,dt=1+O\left(\lambda^{-\rho}\right),\nonumber
\end{align}
where we used Lemma \ref{estim}.  Hence, these estimates along with \eqref{asy_w12_eq} and \eqref{fu_rel}
completes proof.
\end{proof}
\begin{remark}
{\rm We can better approximate solutions by solving integral equations \eqref{int-1} and \eqref{int-2} by
iteration, beginning with $w_1(x)=0$ and $w_2(x)=1$. Then use Lemma \ref{gen_for} and ideas in the proof of Lemma
\ref{asy_lemma}, one can improve the error terms further.}
\end{remark}

\section{\bf Asymptotics near the negative real $\lambda$-axis}
\label{prop_sect}

In this section, we will obtain asymptotics of $f(0, {a},\lambda)$ and $f^\d(0, {a},\lambda)$ as
$\lambda\to\infty$ in a sector containing the negative real $\lambda$-axis.

Let us review some additional properties of solutions of \eqref{complex-1}.
 These solutions have simple asymptotic behavior as
$z\to\infty$ in the complex $z$-plane \cite[\S 7.4]{Hille}. In order to describe this simple asymptotic behavior,
we introduce the Stokes sectors.
\begin{definition}
{\rm The {\it  Stokes sectors} $S_k$ of the equation \eqref{complex-1} are
$$S_k=\left\{z\in \C:\left|\arg (z)-\frac{2k\pi}{m+2}\right|<\frac{\pi}{m+2}\right\}\quad\text{for}\quad k\in \Z.$$ }
\end{definition}
Hille \cite[\S 7.4]{Hille} showed that every nonconstant solution of \eqref{complex-1} either decays to zero or
blows up exponentially, in each Stokes sector $S_k$. Notice from \eqref{Si-eq1} that $f(z, {a},\lambda)$ in
\eqref{Si-eq1} decays in $S_0$ and blows up in $S_{-1}\cup S_1$.

A few notations and definitions are in order. We will use
\begin{equation}\label{G_def}
G^{\ell}( {a}):=(\omega^{-\ell}a_1, \omega^{-2\ell}a_2,\ldots,\omega^{-m\ell}a_{m}) \quad \text{for}\quad \ell\in
\Z,\,\,\text{where }\,\,\omega=\exp\left[\frac{2\pi i}{m+2}\right].
\end{equation}
Then from the definition of $b_{j,k}( {a})$, for $a\in\C^{m}$ fixed,
\begin{equation}\nonumber
b_{j,k}(G^{\ell}( {a}))=\omega^{-j\ell}b_{j,k}( {a}),\quad \ell\in\Z.
\end{equation}
Also one can check straightforwardly that
\begin{equation}\label{fk_def}
f_k(z, {a},\lambda):=f(\omega^{-k}z,G^k( {a}),\omega^{2k}\lambda),\quad k\in\Z,
\end{equation}
is a solution of \eqref{complex-1}, decays in $S_k$, and blows up in $S_{k-1}\cup S_{k+1}$. In particular,
$f_{0}(z, {a},\lambda)$ and $f_{-1}(z, {a},\lambda)$ are linearly independent and
\begin{equation}\label{basic_eq}
f_{1}(z, {a},\lambda)=C( {a},\lambda)f_{0}(z, {a},\lambda)+\widetilde{C}( {a},\lambda)f_{-1}(z,
{a},\lambda)\quad\text{for some $C( {a},\lambda)$ and $\widetilde{C}( {a},\lambda)$.}
\end{equation}
Then one can see that these coefficients can be expressed as
\begin{equation}\label{C_def}
C( {a},\lambda)=\frac{\mathcal{W}_{-1,1}( {a},\lambda)}{\mathcal{W}_{-1,0}( {a},\lambda)}\quad\text{and} \quad
\widetilde{C}( {a},\lambda)=\frac{\mathcal{W}_{1,0}( {a},\lambda)}{\mathcal{W}_{-1,0}( {a},\lambda)},
\end{equation}
where $\mathcal{W}_{j,\ell}=f_jf_{\ell}^\d -f_j^\d f_{\ell}$ is the Wronskian of $f_j$ and $f_{\ell}$.
 Moreover,
we have the following lemma.
\begin{lemma}\label{shift_lemma}
Suppose $k,\,j\in\Z$. Then
\begin{equation}\label{kplus1}
\mathcal{W}_{k+1,j+1}( {a},\lambda)=\omega^{-1}\mathcal{W}_{k,j}(G( {a}),\omega^2\lambda),
\end{equation}
and $\mathcal{W}_{0,1}( {a},\lambda)=2\omega^{\mu( {a})}$.
\end{lemma}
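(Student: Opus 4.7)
The plan is to establish the two identities by separate arguments. For the shift identity, I would begin with the observation that the defining formula \eqref{fk_def} yields directly
\[
f_{k+1}(z,a,\lambda)=f(\omega^{-(k+1)}z, G^{k+1}(a), \omega^{2(k+1)}\lambda)=f_k(\omega^{-1}z,G(a),\omega^2\lambda).
\]
Differentiating in $z$ introduces a factor $\omega^{-1}$ via the chain rule, so $f'_{k+1}(z,a,\lambda)=\omega^{-1}f'_k(\omega^{-1}z,G(a),\omega^2\lambda)$. Substituting both identities into $\mathcal{W}_{k+1,j+1}=f_{k+1}f'_{j+1}-f'_{k+1}f_{j+1}$ produces an overall factor $\omega^{-1}$ multiplying $\mathcal{W}_{k,j}(G(a),\omega^2\lambda)$ evaluated at $\omega^{-1}z$. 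Since the Wronskian of two solutions of \eqref{complex-1} is constant in $z$ (the equation has no first-derivative term), this evaluation may be replaced by the value at any point, giving the claimed identity.

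For the normalization $\mathcal{W}_{0,1}(a,\lambda)=2\omega^{\mu(a)}$, I would exploit $z$-constancy of the Wronskian and compute it asymptotically as $z\to\infty$ along a ray, say $\arg z=\pi/(m+2)$, which lies in the common domain of validity of the expansions \eqref{Si-eq1}--\eqref{Si-eq2} for $f_0$ and (after the substitution $z\mapsto\omega^{-1}z$) for $f_1$. The crucial preliminary calculation is that
\[
F(\omega^{-1}z,G(a),\omega^2\lambda)=-F(z,a,\lambda),
\]
which follows term by term from $\omega^{-(m+2)/2}=-1$ together with the homogeneity property $b_j(G(a))=\omega^{-j}b_j(a)$ inherited from $b_{j,k}(G(a))=\omega^{-j}b_{j,k}(a)$. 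This sign flip makes the exponential factors in $f_0f_1'$ and $f_0'f_1$ cancel identically, leaving a product of algebraic factors.

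What remains is then a purely algebraic coefficient computation. The asymptotics give $f_0f_1'-f_0'f_1\sim\bigl[-\omega^{-1-\mu(G(a))}+\omega^{m/2-\mu(G(a))}\bigr]z^{\mu(a)+\mu(G(a))-m/2}$. For the Wronskian to be genuinely constant the exponent must vanish, which amounts to verifying $\nu(a)+\nu(G(a))=0$: when $m$ is odd this is trivial since $\nu\equiv 0$, and when $m$ is even it follows from $\nu(G(a))=\omega^{-(m/2+1)}\nu(a)=-\nu(a)$ because $\omega^{m/2+1}=e^{\pi i}=-1$. Consequently $\mu(G(a))=m/2-\mu(a)$, and the prefactor simplifies to $\omega^{\mu(a)}\bigl(1-\omega^{-(m/2+1)}\bigr)=2\omega^{\mu(a)}$.

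The main obstacle is keeping straight the interplay between the $G$-action on the coefficient vector and the scaling $z\mapsto\omega^{-1}z$; once the two key identities $F(\omega^{-1}z,G(a))=-F(z,a)$ and $\mu(a)+\mu(G(a))=m/2$ are secured, the rest is mechanical algebra, and the constancy of the Wronskian automatically promotes the leading-order asymptotic to an equality.
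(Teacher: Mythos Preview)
Your argument is correct. The paper itself does not supply a proof of this lemma; it simply refers the reader to Sibuya's monograph. Your approach---deriving the shift relation directly from the definition \eqref{fk_def} and the chain rule, and computing $\mathcal{W}_{0,1}$ by inserting the asymptotics \eqref{Si-eq1}--\eqref{Si-eq2} along a ray in the overlap of the validity sectors for $f_0$ and $f_1$---is exactly the standard route (and is in substance Sibuya's). The two algebraic identities you isolate, $F(\omega^{-1}z,G(a))=-F(z,a)$ and $\mu(a)+\mu(G(a))=m/2$, are precisely what makes the exponentials cancel and the power of $z$ vanish, and your verification of both is accurate; in particular $\omega^{-(m/2+1)}=\omega^{-(m+2)/2}=-1$ is the key numerical fact. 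The final simplification $\omega^{\mu(a)}\bigl(1-\omega^{-(m/2+1)}\bigr)=2\omega^{\mu(a)}$ is correct, and the constancy of the Wronskian indeed turns the leading asymptotic into an exact equality.
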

\begin{proof}
See Sibuya \cite[pages 116-118]{Sibuya}.
\end{proof}

Now we are ready for  asymptotics of $f(0, {a},\lambda)$ and $f^\d(0, {a},\lambda)$ near the negative real
$\lambda$-axis.
\begin{theorem}
Uniformly on any compact set of $a\in\C^{m}$,
\begin{align}
\mathcal{W}_{-1,1}( {a},\lambda)=\,&\,2i\exp\left[L(G^{-1}( {a}),
\omega^{-2}\lambda)+L(G^{1}( {a}),\omega^{-m}\lambda)+O\left(\lambda^{-\rho}\right)\right],\label{wron_asy}\\
f(0, {a},\lambda)=\,&\, \omega^{\frac{1}{2}+\mu(G^{-1}( {a}))}\lambda^{-\frac{1}{4}}\exp\left[-L(G^{-1}( {a}), \omega^{-2}\lambda)+O\left(\lambda^{-\rho}\right)\right]\label{f0_asy}\\
&-i\omega^{\frac{1}{2}+\mu( {a})}\lambda^{-\frac{1}{4}}\exp\left[-L(G( {a}),
\omega^{-m}\lambda)+O\left(\lambda^{-\rho}\right)\right],\nonumber\\
f^\d(0, {a},\lambda)=\,&\, \omega^{\frac{1}{2}+\mu(G^{-1}( {a}))}\lambda^{\frac{1}{4}}\exp\left[-L(G^{-1}( {a}), \omega^{-2}\lambda)+O\left(\lambda^{-\rho}\right)\right]\label{fprime_asy}\\
&+i\omega^{\frac{1}{2}+\mu( {a})}\lambda^{\frac{1}{4}}\exp\left[-L(G( {a}),
\omega^{-m}\lambda)+O\left(\lambda^{-\rho}\right)\right],\nonumber
\end{align}
as $\lambda\to\infty$ in the sector
\begin{equation}\label{sec_neg}
\pi-\frac{4\pi}{m+2}+\delta\leq\arg(\lambda)\leq\pi+\frac{4\pi}{m+2}-\delta.
\end{equation}
\end{theorem}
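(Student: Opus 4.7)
The plan is to apply the connection formula \eqref{basic_eq} at $z=0$. Solving the resulting $2\times 2$ linear system for $f(0,a,\lambda)$ and $f^\d(0,a,\lambda)$ gives
\begin{equation*}
f(0,a,\lambda)=\frac{\mathcal{W}_{-1,0}}{\mathcal{W}_{-1,1}}\,f_1(0,a,\lambda)-\frac{\mathcal{W}_{1,0}}{\mathcal{W}_{-1,1}}\,f_{-1}(0,a,\lambda),
\end{equation*}
and analogously for $f^\d(0,a,\lambda)$. Two of the three Wronskians are $\lambda$-independent constants: applying Lemma \ref{shift_lemma} with $(k,j)=(-1,0)$ together with $\mathcal{W}_{0,1}(a,\lambda)=2\omega^{\mu(a)}$ gives $\mathcal{W}_{-1,0}(a,\lambda)=2\omega^{1+\mu(G^{-1}(a))}$, and antisymmetry yields $\mathcal{W}_{1,0}(a,\lambda)=-2\omega^{\mu(a)}$. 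All $\lambda$-dependence of the coefficients is thus carried by $\mathcal{W}_{-1,1}$.

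I would compute $\mathcal{W}_{-1,1}(a,\lambda)$ by evaluating it at $z=0$ as $f_{-1}(0)f_1^\d(0)-f_{-1}^\d(0)f_1(0)$ and using \eqref{fk_def} and its $z$-derivative to rewrite each factor in terms of $f$ or $f^\d$ at arguments $(0,G^{\pm 1}(a),\omega^{\mp 2}\lambda)$. The key sector check is that for $\arg(\lambda)$ in the range \eqref{sec_neg}, the principal argument of $\omega^{-2}\lambda$ is $\arg(\lambda)-\tfrac{4\pi}{m+2}$, whereas the principal argument of $\omega^{2}\lambda=\omega^{-m}\lambda$ must be taken as $\arg(\lambda)+\tfrac{4\pi}{m+2}-2\pi$; for $m\geq 3$ both lie in $|\arg|\leq\pi-\delta$, so Theorem \ref{improv} applies to every factor.

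The $-2\pi$ branch shift for $\omega^{2}\lambda$ yields $(\omega^{2}\lambda)^{\pm 1/4}=\mp i\,\omega^{\pm 1/2}\lambda^{\pm 1/4}$, in contrast with the direct $(\omega^{-2}\lambda)^{\pm 1/4}=\omega^{\mp 1/2}\lambda^{\pm 1/4}$. Combined with the prefactor $\omega^{\mp 1}$ arising from $f_{\pm 1}^\d(0,a,\lambda)=\omega^{\mp 1}f^\d(0,G^{\pm 1}(a),\omega^{\mp 2}\lambda)$, a short calculation gives $f_{-1}(0)f_1^\d(0)\sim i\,e^{L_1+L_2}$ and $f_{-1}^\d(0)f_1(0)\sim -i\,e^{L_1+L_2}$, where $L_1=L(G^{-1}(a),\omega^{-2}\lambda)$ and $L_2=L(G(a),\omega^{-m}\lambda)$. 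Their difference therefore equals $2i\,e^{L_1+L_2}$ up to a multiplicative $1+O(\lambda^{-\rho})$ from Theorem \ref{improv}, and $\log(1+O(\lambda^{-\rho}))=O(\lambda^{-\rho})$ moves this error into the exponent, giving \eqref{wron_asy}.

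Finally, substituting \eqref{wron_asy} and the constants $\mathcal{W}_{\pm 1,0}$ back into the displayed formulas for $f(0,a,\lambda)$ and $f^\d(0,a,\lambda)$ produces \eqref{f0_asy} and \eqref{fprime_asy}: the two summands become proportional to $e^{-L_1}$ and $e^{-L_2}$ respectively, the prefactors $\omega^{1/2+\mu(G^{-1}(a))}$ and $\omega^{1/2+\mu(a)}$ arise by combining $\omega^{1+\mu(G^{-1}(a))}$ and $\omega^{\mu(a)}$ with the branch-shift factors $\pm i$, and the sign flip between \eqref{f0_asy} and \eqref{fprime_asy} in front of the $L_2$-term reflects the $-1$ in the $f^\d$ asymptotic of Theorem \ref{improv}. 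The main obstacle is thus just the bookkeeping of fractional $\omega$-powers and the principal-branch convention for $\omega^{2}\lambda$; once these conventions are fixed, the remaining algebra is mechanical.
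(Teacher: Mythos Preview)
Your proposal is correct and follows essentially the same route as the paper: solve the connection formula \eqref{basic_eq} for $f_0$, identify $\mathcal{W}_{-1,0}$ and $\mathcal{W}_{1,0}$ as the $\lambda$-independent constants $2\omega^{1+\mu(G^{-1}(a))}$ and $-2\omega^{\mu(a)}$ via Lemma \ref{shift_lemma}, evaluate $\mathcal{W}_{-1,1}$ at $z=0$, and feed in Theorem \ref{improv} after checking that both rotated arguments land in $|\arg|\leq\pi-\delta$. The only cosmetic difference is that the paper writes the rotated argument as $\omega^{-m}\lambda$ rather than $\omega^{2}\lambda$, which places it on the principal sheet directly and obviates your explicit $-2\pi$ branch shift; the resulting factors of $\pm i$ and the final prefactors agree with your computation.
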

\begin{proof}
The Wronskians $\mathcal{W}_{j,\ell}$ here are independent of $z$. Thus,
\begin{align}
&\mathcal{W}_{-1,1}( {a},\lambda)\nonumber\\
&=f_{-1}(z,  {a}, \lambda)f_{1}^\d(z,  {a}, \lambda)-f_{-1}^\d(z,  {a}, \lambda)f_{1}(z,  {a}, \lambda)\nonumber\\
&=\omega^{-1}f(\omega z, G^{-1}( {a}), \omega^{-2}\lambda)f^\d(\omega^{-1} z, G^{1}( {a}), \omega^{2}\lambda)-\omega f^\d(\omega z, G^{-1}( {a}), \omega^{-2}\lambda)f(\omega^{-1} z, G^{1}( {a}), \omega^{2}\lambda)\nonumber\\
&=\omega^{-1}f(0, G^{-1}( {a}), \omega^{-2}\lambda)f^\d(0, G( {a}), \omega^{-m}\lambda)-\omega f^\d(0, G^{-1}(
{a}), \omega^{-2}\lambda)f(0, G( {a}), \omega^{-m}\lambda).\nonumber
\end{align}
When $\lambda$ is in the sector \eqref{sec_neg}, we see that
$\left|\arg\left(\omega^{-2}\lambda\right)\right|\leq\pi-\delta$ and
$\left|\arg\left(\omega^{-m}\lambda\right)\right|\leq\pi-\delta$. So
we use Theorem \ref{improv} to prove \eqref{wron_asy}.

To prove \eqref{f0_asy} and \eqref{fprime_asy}, we first use \eqref{basic_eq} along with \eqref{fk_def} and
\eqref{C_def}, that is,
\begin{equation}
f(\omega^{-1}z,G( {a}),\omega^2\lambda)=\frac{\mathcal{W}_{-1,1}( {a},\lambda)}{\mathcal{W}_{-1,0}(
{a},\lambda)}f(z, {a},\lambda)+\frac{\mathcal{W}_{1,0}( {a},\lambda)}{\mathcal{W}_{-1,0}( {a},\lambda)}f(\omega
z,G^{-1}( {a}),\omega^{-2}\lambda).
\end{equation}
 Then we evaluate this and its differentiated form (w.r.t the $z$-variable) at $z=0$, solve the
resulting equations for $f(0, {a},\lambda)$ and $f^\d(0, {a},\lambda)$, and finally use Theorem \ref{improv} and
\eqref{wron_asy} to complete the proofs.

For example,
\begin{align}
f^\d(0, {a},\lambda)=&\frac{\omega^{-1}\mathcal{W}_{-1,0}( {a},\lambda)f^\d(0,G( {a}),\omega^2\lambda)-{\omega \mathcal{W}_{1,0}( {a},\lambda)}f^\d(0,G^{-1}( {a}),\omega^{-2}\lambda)}{\mathcal{W}_{-1,1}( {a},\lambda)}\nonumber\\
=&\frac{2\omega^{\mu(G^{-1}( {a}))}f^\d(0,G( {a}),\omega^{-m}\lambda)+2\omega^{1+\mu( {a})}f^\d(0,G^{-1}(
{a}),\omega^{-2}\lambda)}{2i\exp\left[L(G^{-1}( {a}), \omega^{-2}\lambda)+L(G( {a}),
\omega^{-m}\lambda)+O\left(\lambda^{-\rho}\right)\right]}\nonumber
\end{align}
as $\lambda\to\infty$ in the sector \eqref{sec_neg}. Then we use
Theorem \ref{improv} to complete the proof of \eqref{f0_asy}.
\end{proof}


\section{\bf Proof of Theorem \ref{the_main}}\label{main_thm_sec}
In this section, we prove Theorem \ref{the_main}.

First, we use the asymptotics of $f(0,  {a}, \lambda)$ and $f^\d(0,  {a}, \lambda)$ in Theorem \ref{improv} into
$\cos\theta f(0,  {a}, \lambda)+\sin\theta f^\d(0,
 {a}, \lambda)=0$ and rearrange the resulting asymptotic equation to get
\begin{equation}\label{asy_eq}
i\omega^{\mu(G^{-1}( {a}))-\mu(
{a})}=\frac{\sin\theta-\cos\theta\lambda^{-\frac{1}{2}}}{\sin\theta+\cos\theta\lambda^{-\frac{1}{2}}} \exp\left[H(
{a},\lambda)+O\left(\lambda^{-\rho}\right)\right],
\end{equation}
where
\begin{equation}
H( {a},\lambda)=L(G^{-1}( {a}), \omega^{-2}\lambda)-L(G^{1}( {a}), \omega^{-m}\lambda).
\end{equation}
So when $\sin\theta\not=0$, since
$\frac{1-t}{1+t}=1-2t+O\left(t^2\right)=\exp\left[-2t+O\left(t^2\right)\right]$
as $t\to0$, we have
\begin{equation}
i\exp\left[\frac{4\nu( {a})\pi i}{m+2}\right]=\exp\left[H(
{a},\lambda)-\frac{2\cot\theta}{\lambda^{\frac{1}{2}}}+O\left(\lambda^{-\rho}\right)\right].
\end{equation}

Then since, from Lemma \ref{asy_lemma},
\begin{align}
H( {a},\lambda)-\frac{2\cot\theta}{\lambda^{\frac{1}{2}}}+O\left(\lambda^{-\rho}\right)&=K_{m}\left(\omega^{-2}\lambda\right)^{\frac{1}{2}+\frac{1}{m}}(1+o(1))-K_{m}\left(\omega^{-m}\lambda\right)^{\frac{1}{2}+\frac{1}{m}}(1+o(1))\nonumber\\
&=K_m\left(\exp\left[-\frac{2\pi}{m} i\right]-\exp\left[-\pi i\right]\right)\lambda^{\frac{1}{2}+\frac{1}{m}}(1+o(1))\nonumber\\
&=K_m\left(1+\exp\left[-\frac{2\pi}{m}
i\right]\right)\lambda^{\frac{1}{2}+\frac{1}{m}}(1+o(1)),\nonumber
\end{align}
and since $\arg\left(1+\exp\left[-\frac{2\pi}{m}
i\right]\right)=-\frac{\pi}{m}$, we have
\begin{equation}\nonumber
\arg\left(H(
{a},\lambda)-\frac{2\cot\theta}{\lambda^{\frac{1}{2}}}+O\left(\lambda^{-\rho}\right)\right)=-\frac{\pi}{m}+\frac{m+2}{2m}\arg(\lambda)+o(1).
\end{equation}
Thus, if $\lambda$ lies in \eqref{sec_neg} and $|\lambda|$ is large,
we have
\begin{equation}
\left|\arg\left(H(
{a},\lambda)-\frac{2\cot\theta}{\lambda^{\frac{1}{2}}}+O\left(\lambda^{-\rho}\right)\right)-\frac{\pi}{2}\right|\leq\frac{2\pi}{m}+o(1).
\end{equation}

So $\lambda\mapsto H(
{a},\lambda)-\frac{2\cot\theta}{\lambda^{\frac{1}{2}}}+O\left(\lambda^{-\rho}\right)$
maps the sector \eqref{sec_neg} near infinity onto a region
containing $|\arg(\lambda)-\frac{\pi}{2}|\leq \varepsilon_1$ and
$|\lambda|\geq M_0$ for some positive real numbers $\varepsilon_1,
M_0.$ Hence, there exists a sequence of the numbers $\lambda_n$ in
\eqref{sec_neg} for each  $n\geq N_0$ for some positive integer
$N_0=N_0( {a},\theta)$ such that $ \cos\theta f(0,  {a},
\lambda_n)+\sin\theta f^\d(0,  {a}, \lambda_n)=0 $ and hence
\begin{equation}\label{bas_eqn}
H_1( {a},\lambda_n):=H( {a},\lambda_n)-\frac{2\cot\theta}{\lambda_n^{\frac{1}{2}}}-\frac{4\nu( {a})\pi
i}{m+2}+O\left(\lambda_n^{-\rho}\right)\underset{n\to+\infty}{=}2n\pi i+\frac{\pi}{2}i.
\end{equation}
Next, by Lemma \ref{asy_lemma},
\begin{align}
H( {a},\lambda_n)=&L(G^{-1}( {a}),\omega^{-2}\lambda_n)-L(G( {a}),\omega^{-m}\lambda_n)\nonumber\\
=&\sum_{j=0}^{\infty}\left(K_{m,j}(G^{-1}( {a}))(\omega^{-2}\lambda_n)^{\frac{1}{2}+\frac{1-j}{m}}-K_{m,j}(G( {a}))(\omega^{-m}\lambda_n)^{\frac{1}{2}+\frac{1-j}{m}}\right)\nonumber\\
&-\frac{\nu(G^{-1}( {a}))}{m}\ln(\omega^{-2}\lambda_n)+\frac{\nu(G( {a}))}{m}\ln(\omega^{-m}\lambda_n)\nonumber\\
=&\sum_{j=0}^{\infty}\left(\omega^{j}\omega^{-2(\frac{1}{2}+\frac{1-j}{m})}-\omega^{-j}\omega^{-m(\frac{1}{2}+\frac{1-j}{m})}\right)K_{m,j}( {a})\lambda_n^{\frac{1}{2}+\frac{1-j}{m}}+\frac{\nu( {a})}{m}\frac{(2m-4)\pi}{m+2}i\nonumber\\
=&2i\sum_{j=0}^{\infty}\sin\left(\frac{(m-2+2j)\pi}{2m}\right)K_{m,j}(
{a})\left(-\lambda_n\right)^{\frac{1}{2}+\frac{1-j}{m}}+\frac{\nu( {a})}{m}\frac{(2m-4)\pi}{m+2}i.\nonumber
\end{align}
So from this and \eqref{bas_eqn} one can get that
\begin{equation}\label{the_asy_eqn}
\frac{1}{\pi}\sum_{j=0}^{m+1}\cos\left(\frac{(j-1)\pi}{m}\right)K_{m,j}(
{a})\left(-\lambda_n\right)^{\frac{1}{2}+\frac{1-j}{m}}-\frac{\nu( {a})}{m} +\frac{\cot\theta}{\pi
\left(-\lambda_n\right)^{\frac{1}{2}}}+O\left(\lambda_n^{-\rho}\right)\underset{n\to+\infty}{=}\left(n+\frac{1}{4}\right),
\end{equation}
where we used $\sin\left(\frac{\pi}{2}+t\right)=\cos t$.

So far, we have showed that there exists $N_0>0$ such that if $n\geq N_0$, then there exists $\lambda_n$ such that
\eqref{the_asy_eqn} holds. There are two issues that we need to examine further in the above argument. In
\eqref{bas_eqn} we did not discuss whether or not there is exactly one $\lambda_n$ for each and every $n\geq N_0$.
Also, we did not examine how many eigenvalues are not in the set $\{-\lambda_n\}_{n\geq N_0}$. These issues are
important because different staring points of the index set  create  an $O(1)$ error.

In order to resolve these issues, we will use three ingredients. First, we will use that when the potential is
$V(x)=x^m$, there is exactly one $-\lambda_n$ for each and every $n\geq N_0$. Moreover, there are exactly $N_0$
eigenvalues that are not in $\{-\lambda_n\}_{n\geq N_0}$ \cite{TIT1}. Second, we will show and use that for all
large $n$, $|\lambda_n|<|\lambda_{n+1}|$. Third, we will show and use  continuity of eigenvalues.

{\it Step 1}: First, we will use the following result of Titchmarsh \cite{TIT1}.  The eigenvalue
problem $-y^\dd+x^m y=Ey$ under the boundary condition \eqref{bd_cond} has infinitely many eigenvalues
$\{E_n\}_{n\geq 0}$ with $E_n<E_{n+1}$ for all $n\geq 0$ such that
\begin{equation}\label{Titch}
\frac{1}{\pi}\int_0^{E_n^{\frac{1}{m}}}\sqrt{E_n-x^m}\,dx\underset{n\to+\infty}{=}\left(n+\frac{1}{4}\right)+o\left(1\right),
\end{equation}
provided that $\sin\theta\not=0$ (see, e.g., \cite{TIT2,TIT1} and references therein). Of course, we can order the
eigenvalues $E_n$ in this case in the order of their magnitudes. For the potential $V(x)=x^m$,
$H(0,\lambda_n)=(2\pi)^{-1}B\left(\frac{1}{2},1+\frac{1}{m}\right)\left(-\lambda_n\right)^{\rho}$ and the left-hand side of \eqref{Titch} is
$(2\pi)^{-1}B\left(\frac{1}{2},1+\frac{1}{m}\right)E_n^{\rho}$.  Also, the error terms in both \eqref{the_asy_eqn} and \eqref{Titch} are in $o(1)$. Thus, we
have $E_n=-\lambda_n$ for all $n\geq N_0$ and we can number $\lambda_n$ for $0\leq n<N_0$ so that
$E_n=-\lambda_n$. In other words, there exists exactly one $\lambda_n$ for each and every $n\geq N_0$ in
\eqref{bas_eqn} and there are exactly $N_0$ eigenvalues that are not in the set $\{-\lambda_n\}_{n\geq N_0}$ so
that we can label the eigenvalues $E_n=-\lambda_n$ for $n\geq 0$.

{\it Step 2}: Next, we will show monotonicity of magnitudes of large eigenvalues for each $ {a}\in\C$.
Let us use $\lambda_n( {a},\theta)$ instead of $\lambda_n$ to indicate their dependence on $a$ and $\theta$. We
will show
\begin{equation}\label{mono1}
|\lambda_n( {a},\theta)|<|\lambda_{n+1}( {a},\theta)|\,\,\text{ for all large $n$.}
\end{equation}
Here we allow the possibility of having more than one $\lambda_n( {a},\theta)$ for some $n\geq N_0$ and
\eqref{mono1} holds for any choice of such $\lambda( {a},\theta)$.

Since \eqref{bas_eqn} is an asymptotic formula, one can solve it for $\lambda_n( {a},\theta)$ and obtain
asymptotics of $-\lambda_n$ in terms of $n$. That is, there are $d_j( {a},\theta)$, $0\leq j\leq m+1$ such that
\begin{equation}\label{mono}
-\lambda_n=\sum_{j=0}^{m+1}d_j(
{a},\theta)\left(n+\frac{1}{4}\right)^{\frac{2m}{m+2}\left(1-\frac{j}{m}\right)}+O\left(n^{-\frac{4}{m+2}}\right),
\end{equation}
as $n\to\infty$, where $d_0( {a},\theta)>0$.
\begin{equation}\label{d_def}
d_0( {a},
\theta)=\frac{\pi}{K_{m,0}\cos\left(\frac{\pi}{m}\right)}=\left(\frac{2\pi}{B\left(\frac{1}{2},1+\frac{1}{m}\right)}\right)^{\frac{2m}{m+2}}.
\end{equation}
 In particular, $\arg(-\lambda_n)\to 0$. At this point, we allow
possibility of getting more than one $\lambda_n$ for some $n\geq N_0$ in \eqref{the_asy_eqn}. Equation
\eqref{mono} holds for any such $\lambda_n$. Then since
$$\left((n+1)+\frac{1}{4}\right)^{\frac{2m}{m+2}\left(1-\frac{j}{m}\right)}=\left(n+\frac{1}{4}\right)^{\frac{2m}{m+2}\left(1-\frac{j}{m}\right)}\left(1+\frac{1}{n+\frac{1}{4}}\right)^{\frac{2m}{m+2}\left(1-\frac{j}{m}\right)},$$
we obtain
\begin{equation}\label{mono2}
\text{$-\lambda_{n+1}=-\lambda_n+d\left(n+\frac{1}{4}\right)^{\frac{m-2}{m+2}}+o\left(\left(n+\frac{1}{4}\right)^{\frac{m-2}{m+2}}\right)
$ for some $d>0$.}
\end{equation}
Since $\arg(-\lambda_n)\to 0$, \eqref{mono2} implies \eqref{mono1}.

{\it Step 3}: Now we will prove some form of continuity of eigenvalues. Hurwitz's theorem  in
complex analysis states that if a sequence of analytic functions $\psi_n$ in a common domain converges uniformly
to an analytic function $\psi$ on all compact sets in the domain, then for a given open disk on the boundary of
which the limit function $\psi$ does not have any zeros, there exists $N$ such that if $n\geq N$ then $\psi_n$ and
$\psi$ have the same number of zeros in the open disk.

We fix $\theta$ and let $\phi( {a},E):=\cos\theta f(0, {a}, -E)+\sin\theta f^\d(0,  {a}, -E)$. Then $\phi( {a},E)$
is an entire function of $ {a}$ and $E$ and on any compact set of $E$, $\phi( {a},E)$ converges uniformly to $\phi(
{a}^*,E)$ as $ {a}\to {a}^*$. Hence, for each $ {a}^*$ and a given open disk whose boundary does not contain any
zeros of $E\mapsto \phi( {a}^*,E)$, there exists $\delta>0$ such that if $| {a}-{a}^*|<\delta$, both $\phi( {a},E)$ and $\phi( {a}^*,E)$ have the same
number of zeros inside the open disk by Hurwitz's theorem.

{\it Step 4}: Now we are ready to prove that for each $n$ large,
there exists only one $\lambda_n$ that satisfies \eqref{bas_eqn}.

Since $W_1(0,a,\lambda)$ and $W_2(0,a,\lambda)$ in \eqref{asy_w12_eq} are continuous functions of $a$ and $\lambda$,
$H_1(a,\lambda)$ in \eqref{bas_eqn} is a continuous function of $a$ and $\lambda$. Thus, $H_1(a,\lambda_n(a))$  is
a continuous function of $a$ and  \eqref{bas_eqn} holds for all $n\geq N_0$. Since for the potential $V(x)=x^m$,
eigenvalues $E_n(0)=-\lambda_n(0)$ are all simple, we conclude that for $n\geq N_0$, there exists exactly
one $\lambda_n(a)$ satisfying \eqref{bas_eqn}.

Then, by Step 3, one sees that there are exactly $N_0$-eigenvalues that are not in $\{-\lambda_n(a)\}_{n\geq N_0}$
since that is the case for $a=0$. Since we discuss the asymptotics of eigenvalues  in this paper, how we order
these finitely many eigenvalues will not change the results.

Next we define
\begin{equation}\label{c_def}
c_j( {a})=\left\{
\begin{array}{ll}
 &\frac{1}{\pi}\cos\left(\frac{(j-1)\pi}{m}\right)K_{m,j}( {a})\quad\text{if $m$ is odd or $j\not=\frac{m}{2}+1$ when
$m$ is even,} \\
 &-\frac{\nu( {a})}{m}\quad\text{if $m$ is even and $j=\frac{m}{2}+1$,}
\end{array}
\right.
\end{equation}
where $K_{m,j}( {a})$ will be defined in the appendix. This completes the proof of \eqref{N_boundary}.

Similarly, when $\sin\theta=0$,  from \eqref{asy_eq} we get
\begin{equation}
-i\exp\left[\frac{4\nu( {a})\pi i}{m+2}\right]=\exp\left[H( {a},\lambda)+O\left(\lambda^{-\rho}\right)\right]
\end{equation}
and one can prove \eqref{D_boundary} for which we use \eqref{Titch} with $\left(n+\frac{1}{4}\right)$ replaced by
$\left(n+\frac{3}{4}\right)$ \cite[\S 11]{TIT1}.

\begin{remark}{\rm
Three properties of $c_j( {a})$ are used   in the proof of Corollary \ref{cor5}. We will see below that these
three properties are due to similar properties of  $b_{j,k}( {a})$ by \eqref{Kmj=}.

Equation \eqref{Kmj=} reads
\begin{equation}
K_{m,j}( {a})=\sum_{k={\lfloor\frac{j-1}{m}\rfloor}+1}^{j}K_{m,j,k}b_{j,k}( {a}),
\end{equation}
and
\begin{equation}
\text{$b_{j,k}( {a})$ is the coefficient of $z^{mk-j}$ in ${\frac{1}{2}\choose{k}}\left(P(z)\right)^k$\,\,for
$1\leq k\leq j$.}
\end{equation}
So if $1\leq j\leq m$, then
\begin{equation}
K_{m,j}( {a})=\frac{K_{m,j,1}}{2}\,a_j+\sum_{k=2}^{j}K_{m,j,k}b_{j,k}( {a})
\end{equation}
that is a non-constant linear function of $a_j$. Note also that
$K_{m,j,k}$ are absolute constants and for $2\leq k\leq j\leq m$,
$b_{j,k}( {a})$ are polynomials in $a_1,\,a_2,\dots, a_{j-1}$ and
independent of $a_j$. Also when $m$ is odd, $\nu( {a})=0$ and when
$m$ is even, $\nu( {a})=b_{\frac{m}{2}+1}( {a})$ is a non-constant
linear function of $a_{\frac{m}{2}+1}$ from the definition. }
\end{remark}

\subsection*{{\bf Acknowledgments}}
The author thanks Richard Laugesen for reading a part of this
manuscript and suggestions for improving its presentation.

\appendix
\section{\bf Computing $K_{m,j,k}$} \label{sA}
\renewcommand{\theequation}{A.\arabic{equation}}
\renewcommand{\thetheorem}{A.\arabic{theorem}}
\setcounter{theorem}{0} \setcounter{equation}{0}

In this appendix we will give a complete description of the term $L( {a},\lambda)$. We begin by defining some
constants needed for $L( {a},\lambda)$.

\begin{lemma}\label{A1}
Let $m\geq 3$ be an integer. Then
\begin{equation}\nonumber
K_m=K_{m,0}:=\int_0^{\infty}\left(\sqrt{1+t^m}-t^{\frac{m}{2}}\right)\,dt=\frac{B\left(\frac{1}{2},1+\frac{1}{m}\right)}{2\cos\left(\frac{\pi}{m}\right)}.
\end{equation}
\end{lemma}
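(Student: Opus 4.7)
The plan is to reduce $K_m$ to a Beta function by two changes of variable and then invoke Euler's reflection formula. First I would substitute $u = t^m$ to clear the half-integer exponents, obtaining
\[
K_m = \frac{1}{m}\int_0^{\infty} u^{1/m-1}\bigl(\sqrt{1+u}-\sqrt{u}\bigr)\,du.
\]
To handle the delicate difference $\sqrt{1+u}-\sqrt{u}$, I would use the elementary representation $\sqrt{1+u}-\sqrt{u} = \tfrac{1}{2}\int_0^1 (v+u)^{-1/2}\,dv$, which is just the fundamental theorem of calculus applied to $v\mapsto\sqrt{v+u}$. Since the integrand is nonnegative, Fubini applies and swapping the order yields a product structure after the further substitution $u=vw$ in the inner integral.

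Carrying this out, the inner integral becomes $\int_0^{\infty} w^{1/m-1}(1+w)^{-1/2}\,dw = B\bigl(\tfrac{1}{m},\tfrac{1}{2}-\tfrac{1}{m}\bigr)$, which converges for $m\geq 3$, while the outer integral is the elementary $\int_0^1 v^{1/m-1/2}\,dv = \tfrac{2m}{m+2}$. Combining,
\[
K_m = \frac{B\bigl(\tfrac{1}{m},\tfrac{1}{2}-\tfrac{1}{m}\bigr)}{m+2} = \frac{\Gamma(1/m)\Gamma(1/2-1/m)}{(m+2)\,\Gamma(1/2)}.
\]

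It remains to rewrite this in the form $\dfrac{B(1/2,1+1/m)}{2\cos(\pi/m)}$. Using $\Gamma(1+1/m)=\tfrac{1}{m}\Gamma(1/m)$ and $\Gamma(3/2+1/m)=\tfrac{m+2}{2m}\Gamma(1/2+1/m)$, one computes
\[
\frac{B(1/2,\,1+1/m)}{2\cos(\pi/m)} = \frac{\Gamma(1/2)\,\Gamma(1/m)}{(m+2)\,\Gamma(1/2+1/m)\cos(\pi/m)},
\]
so the two expressions are equal if and only if $\Gamma(1/2-1/m)\Gamma(1/2+1/m)\cos(\pi/m) = \Gamma(1/2)^{2} = \pi$. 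This is exactly Euler's reflection formula $\Gamma(s)\Gamma(1-s)=\pi/\sin(\pi s)$ at $s=\tfrac{1}{2}-\tfrac{1}{m}$, since $\sin(\pi/2-\pi/m)=\cos(\pi/m)$.

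There is no real obstacle here: the plan is routine once one spots the integral representation of $\sqrt{1+u}-\sqrt{u}$ that converts the awkward cancellation into a Beta integral. The only points needing a line of care are the convergence hypothesis $m\geq 3$ (needed both for the Beta integral and for the integration-by-parts-free swap) and tracking the Gamma-function algebra in the final identification with $B(1/2,1+1/m)/(2\cos(\pi/m))$.
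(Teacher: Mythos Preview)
Your proof is correct and takes a genuinely different route from the paper's. The paper uses a single substitution $\sqrt{u}=\sqrt{1+t^m}-t^{m/2}$, which maps the integral directly to a sum of two Beta integrals $\tfrac{1}{2^{2/m}m}\bigl(B(\tfrac{2}{m},\tfrac{1}{2}-\tfrac{1}{m})+B(\tfrac{2}{m},\tfrac{3}{2}-\tfrac{1}{m})\bigr)$, and then relies on both the reflection formula and the Legendre duplication formula to reach the stated form. Your approach instead linearizes the difference $\sqrt{1+u}-\sqrt{u}$ via the integral representation $\tfrac{1}{2}\int_0^1(v+u)^{-1/2}\,dv$ and a Fubini swap, which factors the computation into one Beta integral times an elementary power integral; the final Gamma bookkeeping then needs only the reflection formula. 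The paper's route is shorter once one sees the substitution, while yours is more systematic and avoids the duplication identity; both hinge on $m\geq 3$ for convergence of the relevant Beta integral.
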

\begin{proof}
Set $\sqrt{u}=\sqrt{1+t^m}-t^{\frac{m}{2}}$. Then
$u=1+2t^m-2t^{\frac{m}{2}}\sqrt{1+t^m}$ and
$\frac{1-u}{2\sqrt{u}}=t^{\frac{m}{2}}$. Thus
\begin{align}
\int_0^{\infty}\left(\sqrt{1+t^m}-t^{\frac{m}{2}}\right)\,dt&=\frac{1}{2^{\frac{2}{m}}m}\int_0^1\left((1-u)^{\frac{2}{m}-1}u^{\frac{1}{2}-\frac{1}{m}-1}+(1-u)^{\frac{2}{m}-1}u^{\frac{3}{2}-\frac{1}{m}-1}\right)\,du\nonumber\\
&=\frac{1}{2^{\frac{2}{m}}m}\left(B\left(\frac{2}{m},\frac{1}{2}-\frac{1}{m}\right)+B\left(\frac{2}{m},\frac{3}{2}-\frac{1}{m}\right)\right),\nonumber
\end{align}
where $B(z,w)$ is the beta function. Then we use the following to
complete the proof.
\begin{align}
&\Gamma(z+1)=z\Gamma(z),\quad \Gamma(z)\Gamma(1-z)=-z\Gamma(-z)\Gamma(z)=\frac{\pi}{\sin(\pi z)}\nonumber\\
&B(z,w)=\int_0^1(1-u)^{z-1}u^{w-1}\,du=\frac{\Gamma(z)\Gamma(w)}{\Gamma(z+w)},\quad
\Gamma(2z)=\frac{2^{2z-\frac{1}{2}}}{\sqrt{2\pi}}\Gamma(z)\Gamma\left(z+\frac{1}{2}\right).\label{gamma_eq}
\end{align}
\end{proof}

\begin{lemma}
Let $m\geq 3$ and $1\leq k\leq j\leq \frac{m+2}{2}$. Then
\begin{align}\nonumber
K_{m,j,k}&:=\int_0^{\infty}\left(\frac{t^{mk-j}}{(1+t^m)^{k-\frac{1}{2}}}-t^{\frac{m}{2}-j}\right)\,dt\nonumber\\
&=\left\{
                    \begin{array}{cl}
-\frac{2}{m}
\quad &\text{if $j=k=1$},\\
&\\
                  \frac{1}{m}B\left(k-\frac{j-1}{m},\,\frac{j-1}{m}-\frac{1}{2}\right) \quad &\text{if $1\leq k\leq j<\frac{m+2}{2}$, $j\not=1$},\\
&\\
                 \frac{2}{m}\left(\ln 2-\frac{1}{1}-\frac{1}{3}-\dots-\frac{1}{2k-5}-\frac{1}{2k-3}\right) \quad &\text{if $1\leq k\leq j=\frac{m+2}{2}$, $m$ even.}
                    \end{array}\right.\nonumber
\end{align}
\end{lemma}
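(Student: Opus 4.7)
Reduce all three sub-cases to a single universal substitution, identify the result with (suitably continued) Beta-function values for the generic case, and treat the logarithmic boundary case by a direct computation producing the digamma function. First, put $u = t^m/(1+t^m)$, so $t^m = u/(1-u)$, $(1+t^m) = (1-u)^{-1}$, and $dt = \tfrac{1}{m}u^{1/m-1}(1-u)^{-1-1/m}du$. A direct computation (exploiting that the two original terms produce the same power of $1-u$) gives, with $\beta := (j-1)/m$,
\begin{equation*}
K_{m,j,k} = \frac{1}{m}\int_0^1 u^{-1/2-\beta}(1-u)^{\beta - 3/2}\bigl(u^{k-1/2} - 1\bigr)\, du.
\end{equation*}
The factor $u^{k-1/2} - 1$ vanishing to first order at $u = 1$ absorbs the otherwise non-integrable singularity there.

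\textbf{Generic case $j < (m+2)/2$ (so $\beta \in [0, 1/2)$).} Split the integrand as $u^{k-\beta-1}(1-u)^{\beta-3/2} - u^{-1/2-\beta}(1-u)^{\beta-3/2}$. To identify the result with Beta values, introduce an auxiliary parameter by temporarily replacing the $(1-u)$-exponent by $\tilde b$: for $\tilde b > -1/2$ the combined integral is analytic in $\tilde b$ (using cancellation at $u=1$), while for $\tilde b > 1/2$ the two split pieces converge separately and equal Beta functions. The two sides agree on the latter range, so by analytic continuation
\begin{equation*}
K_{m,j,k} = \frac{1}{m}\bigl[B(k - \beta, \beta - \tfrac{1}{2}) - B(\tfrac{1}{2} - \beta, \beta - \tfrac{1}{2})\bigr],
\end{equation*}
the Betas read as $\Gamma(p)\Gamma(q)/\Gamma(p+q)$. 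The second Beta vanishes because $(\tfrac{1}{2}-\beta) + (\beta - \tfrac{1}{2}) = 0$ puts $\Gamma(0)$ in the denominator while the numerator remains finite for $\beta \in [0, 1/2)$, leaving $K_{m,j,k} = \frac{1}{m}B\bigl(k - (j-1)/m,\,(j-1)/m - 1/2\bigr)$. Specializing $j = k = 1$ gives $B(1, -1/2) = -2$ and $K_{m,1,1} = -2/m$.

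\textbf{Boundary case $j = (m+2)/2$, $m$ even ($\beta = 1/2$).} Here $t^{m/2-j} = t^{-1}$ is non-integrable at zero and the Beta formula develops a pole; as in the definition of $L(a,\lambda)$ in \eqref{L_def}, the subtracted term is to be read as $(1+t)^{-1}$. After $s = t^m$ and then $u = s/(1+s)$, the finite piece reduces to $\tfrac{1}{m}\int_0^{u_R} u^{k-3/2}/(1-u)\,du$ with $u_R = R^m/(1+R^m)$; writing the integrand as $(u^{k-3/2}-1)/(1-u) + 1/(1-u)$ isolates a logarithmic contribution $\tfrac{1}{m}\ln(1+R^m)$, which cancels $\ln(1+R)$ from $\int_0^R dt/(1+t)$ as $R \to \infty$, leaving
\begin{equation*}
K_{m,(m+2)/2,k} = \frac{1}{m}\int_0^1 \frac{u^{k-3/2} - 1}{1 - u}\,du = -\frac{1}{m}\bigl[\psi(k - \tfrac{1}{2}) + \gamma_E\bigr].
\end{equation*}
Substituting the half-integer digamma value $\psi(k-1/2) = -\gamma_E - 2\ln 2 + 2\sum_{i=1}^{k-1}(2i-1)^{-1}$ yields the claimed harmonic-type sum.

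\textbf{Main obstacle.} The only analytically delicate point is the continuation step above---verifying that the formal split $I = B - B$, valid only on a convergent strip, really extends to our regime so that the $1/\Gamma(0) = 0$ cancellation is legitimate. Differentiation under the integral sign establishes analyticity of the left side in the auxiliary parameter on the extended strip, after which equality with the meromorphic right side follows automatically; the remaining integrals in Step 3 are routine.
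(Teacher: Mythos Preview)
Your argument is correct and takes a genuinely different route from the paper's. The paper proceeds case by case: for $j=k=1$ it recognizes the integrand as $\tfrac{2}{m}\frac{d}{dt}(\sqrt{1+t^m}-t^{m/2})$; for generic $j$ it sets up a recursion in $k$ via integration by parts, reducing to the base case $k=1$ which it then evaluates with the substitution $\sqrt{u}=\sqrt{1+t^m}-t^{m/2}$; the boundary case $j=(m+2)/2$ is handled in a separate lemma, again by an integration-by-parts recursion producing the telescoping sum $-\tfrac{2}{m(2k-3)}$ directly. Your single substitution $u=t^m/(1+t^m)$ unifies all cases and replaces the recursion by analytic continuation in the Beta parameter, with the digamma identity supplying the boundary value; this is cleaner conceptually and gets the $j=k=1$ case for free, at the cost of invoking analytic continuation and half-integer digamma values rather than purely elementary manipulations.

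Two small points worth tightening. First, the phrase ``replacing the $(1-u)$-exponent by $\tilde b$'' is ambiguous: your stated ranges $\tilde b>-1/2$ (combined) and $\tilde b>1/2$ (split) only make sense if $\tilde b$ replaces $\beta$ inside the exponent $\beta-3/2$, not the full exponent itself; say so explicitly, since under the literal reading the target value $\tilde b=\beta-3/2\in[-3/2,-1)$ would fall outside your claimed domain of analyticity. Second, you are right that the lemma's stated integrand $t^{m/2-j}$ is divergent at $t=0$ when $j=(m+2)/2$ and must be read as $(1+t)^{-1}$; the paper itself silently makes this replacement in the subsequent lemma, so your reading matches the intended definition.
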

\begin{proof}
The case when $j=k=1$ is an easy consequence of
$$\frac{d}{dt}\left(\sqrt{1+t^m}-t^{\frac{m}{2}}\right)=\frac{m}{2}\left(\frac{t^{m-1}}{(1+t^m)^{\frac{1}{2}}}-t^{\frac{m}{2}-1}\right).$$

Suppose that $1\leq k\leq j\leq\frac{m+1}{2}$, $j\not=1$. Then since
\begin{align}
&\frac{d}{dt}\left(\frac{t^{mk-(j-1)}}{(1+t^m)^{k-\frac{1}{2}}}-t^{\frac{m}{2}-(j-1)}\right)\nonumber\\
&=(mk-(j-1))\left(\frac{t^{mk-j}}{(1+t^m)^{k-\frac{1}{2}}}-t^{\frac{m}{2}-j}\right)-m(k-\frac{1}{2})\left(\frac{t^{m(k+1)-j}}{(1+t^m)^{(k+1)-\frac{1}{2}}}-t^{\frac{m}{2}-j}\right),\nonumber
\end{align}
we have
\begin{align}
\int_0^{\infty}\left(\frac{t^{mk-j}}{(1+t^m)^{k-\frac{1}{2}}}-t^{\frac{m}{2}-j}\right)dt
&=\frac{m(k-1)-(j-1)}{m(k-1)-\frac{m}{2}}\int_0^{\infty}\left(\frac{t^{m(k-1)-j}}{(1+t^m)^{(k-1)-\frac{1}{2}}}-t^{\frac{m}{2}-j}\right)dt\nonumber\\
&=\frac{\Gamma\left(k-\frac{j-1}{m}\right)\Gamma\left(1-\frac{1}{2}\right)}{\Gamma\left(k-\frac{1}{2}\right)\Gamma\left(1-\frac{j-1}{m}\right)}\int_0^{\infty}\left(\frac{t^{m-j}}{(1+t^m)^{\frac{1}{2}}}-t^{\frac{m}{2}-j}\right)dt.\nonumber
\end{align}
Next, we use the substitution
$\sqrt{u}=\sqrt{1+t^m}-t^{\frac{m}{2}}$ to show
$$\int_0^{\infty}\left(\frac{t^{m-j}}{(1+t^m)^{\frac{1}{2}}}-t^{\frac{m}{2}-j}\right)dt=-\frac{2^{\frac{2(j-1)}{m}}}{m}B\left(1-\frac{2(j-1)}{m},\frac{1}{2}+\frac{(j-1)}{m}\right).$$
Finally, we use equations in \eqref{gamma_eq} to complete the proof
for $1\leq k\leq j\leq\frac{m+1}{2}$, $j\not=1$.

\end{proof}

\begin{lemma}
Let $m\geq 4$ be an even integer and let $1\leq k\leq j= \frac{m+2}{2}$. Then
\begin{equation}\nonumber
K_{m,j,k}:=\int_0^{\infty}\left(\frac{t^{mk-j}}{(1+t^m)^{k-\frac{1}{2}}}-\frac{1}{1+t}\right)\,dt=\frac{2}{m}\left(\ln 2-\frac{1}{1}-\frac{1}{3}-\dots-\frac{1}{2k-5}-\frac{1}{2k-3}\right).
\end{equation}
\end{lemma}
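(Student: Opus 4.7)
The plan is to parallel the preceding lemma (recursion on $k$ together with a base case), but to replace its derivative-and-telescope argument by a direct difference $I_k - I_{k-1}$, where $I_k := K_{m,j,k}$ with $j = \frac{m+2}{2}$. This modification is forced because the subtracted regularizer $\frac{1}{1+t}$ is no longer a pure power of $t$, so the primitive used previously does not close up.

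For the base case $k = 1$, the integrand is $\frac{t^{m/2-1}}{\sqrt{1+t^m}} - \frac{1}{1+t}$, in which each piece diverges at infinity. Cutting off at $R$, the substitution $u = t^{m/2}$ turns the first piece into $\frac{2}{m}\int_0^{R^{m/2}}(1+u^2)^{-1/2}\,du = \frac{2}{m}\ln\!\bigl(R^{m/2} + \sqrt{1+R^m}\bigr)$, while $\int_0^R \frac{dt}{1+t} = \ln(1+R)$. Letting $R \to \infty$, the leading $\ln R$ contributions cancel and one obtains $I_1 = \frac{2}{m}\ln 2$, which matches the claim at $k=1$ (the subtracted sum being empty). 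This is the step I expect to demand the most care, since it is the only place where the individual divergences of the two regularized pieces must actually be reconciled.

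For the recursive step ($k \geq 2$), the algebraic identity $t^{mk-j} - t^{m(k-1)-j}(1+t^m) = -t^{m(k-1)-j}$ yields
$$\frac{t^{mk-j}}{(1+t^m)^{k-1/2}} - \frac{t^{m(k-1)-j}}{(1+t^m)^{k-3/2}} = -\frac{t^{m(k-1)-j}}{(1+t^m)^{k-1/2}}.$$
The right-hand side is absolutely integrable on $(0,\infty)$: at infinity it decays like $t^{-m-1}$, and at $0$ the exponent $m(k-1)-j = \frac{m(2k-3)-2}{2}$ is nonnegative for $k\geq 2$ and $m\geq 4$. Thus the $\frac{1}{1+t}$ regularizers cancel across the subtraction $I_k - I_{k-1}$, and the substitution $u = t^m$ reduces the remaining integral to a Beta integral: $\frac{1}{m}\int_0^\infty \frac{u^{k-5/2}}{(1+u)^{k-1/2}}\,du = \frac{1}{m} B\bigl(k-\tfrac{3}{2},\,1\bigr) = \frac{2}{m(2k-3)}$. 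Hence $I_k - I_{k-1} = -\frac{2}{m(2k-3)}$, and telescoping from $I_1 = \frac{2}{m}\ln 2$ gives $I_k = \frac{2}{m}\bigl(\ln 2 - \sum_{\ell=2}^k \frac{1}{2\ell-3}\bigr)$, which upon listing the denominators $2\ell - 3 \in \{1,3,\ldots,2k-3\}$ is precisely the stated formula.
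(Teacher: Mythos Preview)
Your proof is correct and follows the same overall architecture as the paper's: establish the base case $I_1=\frac{2}{m}\ln 2$, prove the one-step recursion $I_k-I_{k-1}=-\frac{2}{m(2k-3)}$, and telescope. The only tactical differences are local. For the base case, the paper writes down the explicit antiderivative $\frac{2}{m}\ln\bigl(\sqrt{1+t^m}+t^{m/2}\bigr)-\ln(1+t)$ and evaluates at the endpoints; your substitution $u=t^{m/2}$ is just another way of finding that same antiderivative. For the recursion, the paper obtains the difference via integration by parts (differentiating a power of $(1+t^m)$), whereas you obtain it from the algebraic identity $t^{mk-j}-t^{m(k-1)-j}(1+t^m)=-t^{m(k-1)-j}$ and then recognize a Beta integral. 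Your route is marginally more elementary (no boundary terms to track), but the two arguments are essentially equivalent.
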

\begin{proof}
By integration
by parts, for $R>0$,
\begin{equation}\nonumber
\int_0^{R}\frac{t^{mk-\frac{m}{2}-1}}{\left(1+t^m\right)^{k-\frac{1}{2}}}dt=\left.\frac{1}{m\left(-k+\frac{3}{2}\right)}\frac{t^{m(k-1)-\frac{m}{2}}}{\left(1+t^m\right)^{(k-1)-\frac{1}{2}}}\right|_0^R+\int_0^Rt^{m(k-1)-\frac{m}{2}-1}\frac{1}{\left(1+t^m\right)^{(k-1)-\frac{1}{2}}}dt,
\end{equation}
and hence
$$\int_0^{\infty}\left(\frac{t^{mk-\frac{m}{2}-1}}{\left(1+t^m\right)^{k-\frac{1}{2}}}-\frac{t^{m(k-1)-\frac{m}{2}-1}}{\left(1+t^m\right)^{(k-1)-\frac{1}{2}}}\right)\,dt=-\frac{2}{m\left(2k-3\right)}.$$
Also, since
$$
\int\left(\frac{t^{\frac{m}{2}-1}}{\left(t^m+1\right)^{\frac{1}{2}}}-\frac{1}{t+1}\right)dt=\frac{2}{m}\ln\left(\sqrt{1+t^m}+t^{\frac{m}{2}}\right)-\ln(1+t)+C,
$$
one sees that
$$\int_0^{\infty}\left(\frac{t^{\frac{m}{2}-1}}{\left(1+t^m\right)^{\frac{1}{2}}}-\frac{1}{1+t}\right)\,dt=\frac{2\ln 2}{m}.
$$
\end{proof}

\begin{lemma}\label{gen_for}
Let $m\geq 3$ and $j\geq \frac{m+3}{2}$. Let $\ell\geq 0$ be  such that   $\ell m\leq j< (\ell+1)m$. If
$\ell+1\leq k\leq j<(\ell+1)m$, then
\begin{equation}\nonumber
K_{m,j,k}:=\int_0^{\infty}\frac{t^{mk-j}}{(1+t^m)^{k-\frac{1}{2}}}\,dt=\frac{1}{m}B\left(k-\frac{j-1}{m},\frac{j-1}{m}-\frac{1}{2}\right).
\end{equation}
\end{lemma}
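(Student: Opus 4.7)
The plan is to reduce the integral to a standard Beta-function integral via the substitution $u = (1+t^m)^{-1}$, which is the natural change of variable that sends $[0,\infty)$ to $[0,1]$ and diagonalizes the factor $(1+t^m)^{k-1/2}$. The hypothesis $\ell+1 \le k \le j < (\ell+1)m$ with $j \ge \frac{m+3}{2}$ is there precisely to guarantee that the integrand is integrable at both endpoints and that the resulting Beta parameters are positive, so the first step will be to check these facts.

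First I will verify convergence. Near $t=0$ the integrand behaves like $t^{mk-j}$, so we need $mk - j > -1$; since $k \ge \ell+1$ and $j < (\ell+1)m$ we have $mk \ge (\ell+1)m > j$, giving $mk-j \ge 1$, which is comfortably integrable. Near $t = \infty$ the integrand behaves like $t^{mk - j - m(k-1/2)} = t^{m/2 - j}$, so we need $j > m/2 + 1$, which is exactly the standing assumption $j \ge \frac{m+3}{2}$ (interpreted as an integer inequality for each parity of $m$).

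Next I will carry out the substitution $u = (1+t^m)^{-1}$, so that $t = \bigl(\tfrac{1-u}{u}\bigr)^{1/m}$, $1+t^m = u^{-1}$, and
\begin{equation*}
dt \;=\; -\frac{1}{m}\, u^{-1}(1-u)^{-1}\left(\frac{1-u}{u}\right)^{1/m} du.
\end{equation*}
Substituting and collecting powers of $u$ and $1-u$, the integrand becomes
\begin{equation*}
\frac{t^{mk-j}}{(1+t^m)^{k-1/2}}\,dt \;=\; -\frac{1}{m}\, u^{(j-1)/m - 3/2}\,(1-u)^{k - (j-1)/m - 1}\,du.
\end{equation*}
The endpoints map as $t=0 \leftrightarrow u=1$ and $t=\infty \leftrightarrow u=0$, so reversing the limits absorbs the minus sign and leaves
\begin{equation*}
K_{m,j,k} \;=\; \frac{1}{m}\int_0^1 u^{(j-1)/m - 1/2 - 1}(1-u)^{k-(j-1)/m - 1}\,du \;=\; \frac{1}{m}\,B\!\left(\frac{j-1}{m} - \frac{1}{2},\; k - \frac{j-1}{m}\right),
\end{equation*}
which, by symmetry of the Beta function, is the claimed formula. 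The positivity of the Beta arguments is precisely what we already checked in the convergence step: $(j-1)/m - 1/2 > 0$ from $j > m/2 + 1$, and $k - (j-1)/m > 0$ from $mk \ge (\ell+1)m > j > j-1$.

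There is no real obstacle here; the computation is routine once one commits to the correct substitution. The only place where a slip is easy is bookkeeping the exponents of $u$ and $1-u$ and recognizing that the combined power $k - j/m + 1/m - 1 = k - (j-1)/m - 1$ cleanly matches the Beta integrand. I would double-check that step explicitly against the two boundary regimes (small $j$ borderline $\frac{m+3}{2}$ and $k = \ell+1$ with $j$ near $(\ell+1)m$) to make sure no parity issue creeps in, but otherwise the proof is a direct one-substitution calculation.
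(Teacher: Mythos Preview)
Your proof is correct and uses exactly the same substitution as the paper: the paper's proof reads, in its entirety, ``Use the change of the variable $1+t^m=\frac{1}{u}$ and the definition of the beta function,'' which is precisely your $u=(1+t^m)^{-1}$. You have simply supplied the details (convergence check and exponent bookkeeping) that the paper leaves to the reader.
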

\begin{proof}
Use the change of the variable $1+t^m=\frac{1}{u}$ and the
definition of the beta function.
\end{proof}

Next we will give a complete description of $L( {a},\lambda)$. First, we define for $\ell m\leq j\leq (\ell+1)m-1$
with $\ell=0, 1, 2, \dots$,
\begin{equation}\nonumber
g_j(\tau)=\sum_{k=\ell+1}^{j}\frac{b_{j,k}( {a})\tau^{mk-j}}{\left(\tau^m+1\right)^{k-\frac{1}{2}}}.
\end{equation}
Then $\sum_{k=\ell+1}^jb_{j,k}( {a})=b_j( {a})$.

\begin{lemma}\label{asy_lemma}
Let $m\geq 3$ and $a\in\C^{m}$ be fixed. Then there exist
  $K_{m,j}( {a})\in\C$ such
that
\begin{equation}
L( {a},\lambda)=\left\{\begin{array}{rl}
&\sum_{j=0}^{\infty}K_{m,j}( {a})\lambda^{\frac{1}{2}+\frac{1-j}{m}}\,\quad\text{if $m$ is odd,}\\
&\sum_{j=0}^{\infty}K_{m,j}( {a})\lambda^{\frac{1}{2}+\frac{1-j}{m}}-\frac{b_{\frac{m}{2}+1}(
{a})}{m}\ln\lambda\,\quad\text{if $m$ is even,}
\end{array}
\right.\nonumber
\end{equation}
as $\lambda\to\infty$ in the sector $|\arg(\lambda)|\leq\pi-\delta$,
where
\begin{align}
 K_{m,0}( {a})&=K_{m}=\int_0^{\infty}\left(\sqrt{1+t^m}-\sqrt{t^m}\right)\, dt>0\quad\text{for all $m\geq 3$},\nonumber\\
K_{m,j}( {a})&=\int_0^{\infty}\left(g_j(t)-b_j( {a})t^{\frac{m}{2}-j}\right)\,dt\quad\text{for all $1\leq j\leq\frac{m+1}{2}$},\label{def_K}\\
K_{m,\frac{m}{2}+1}( {a})&=\int_0^{\infty}\left(g_{\frac{m}{2}+1}(t)-\frac{b_{\frac{m}{2}+1}(
{a})}{t+1}\right)\,dt\quad\text{when
$m$ is even},\nonumber\\
K_{m,j}( {a})&=\int_0^{\infty}g_j(t)\,dt\quad\text{for all $j\geq\frac{m+3}{2}$}.\label{def_K1}
\end{align}
Moreover,
\begin{equation}\label{Kmj=}
K_{m,j}( {a})=\sum_{k=\lfloor\frac{j}{m}\rfloor+1}^{j}K_{m,j,k}\,b_{j,k}( {a}).
\end{equation}
\end{lemma}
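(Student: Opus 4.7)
The plan is to rescale by $x=\lambda^{1/m}t$ in $L(a,\lambda)$, expand the square root by the binomial series in $P(x)/(x^m+\lambda)$, regroup contributions by fractional power of $\lambda$, and identify each coefficient with one of the $K_{m,j}(a)$. With the principal branch of $\lambda^{1/m}$, valid in $|\arg\lambda|\leq\pi-\delta$, the substitution gives $dx=\lambda^{1/m}\,dt$ and $\sqrt{x^m+\lambda}=\lambda^{1/2}\sqrt{t^m+1}$. Factor
\[
\sqrt{x^m+\lambda+P(x)}=\sqrt{x^m+\lambda}\cdot\sqrt{1+\frac{P(x)}{x^m+\lambda}},
\]
and apply $\sqrt{1+u}=\sum_{k=0}^{N}\binom{1/2}{k}u^{k}+R_{N}(u)$. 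The sector arguments already used in Section~4 give $|P(x)/(x^m+\lambda)|=O(|\lambda|^{-1/m})$ uniformly in $x\geq 0$, so $|R_N|=O(|\lambda|^{-(N+1)/m})$ and contributes $O(\lambda^{1/2-N/m})$ after the $\lambda^{1/m}$ Jacobian, hence drops out for $N$ large.

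The combinatorial heart is that by definition, $b_{j,k}(a)$ equals the coefficient of $z^{mk-j}$ in $\binom{1/2}{k}P(z)^{k}$. After the substitution, the $k$-th binomial term expands as $\sum_{j}b_{j,k}(a)\lambda^{1/2-j/m}t^{mk-j}/(t^m+1)^{k-1/2}$; grouping by $j$ rewrites the integrand as $\sum_{j\geq 0}\lambda^{1/2-j/m}g_j(t)$, with the $j=0$ contribution being $\sqrt{t^m+1}$ from the $k=0$ term. Since $\sum_{k}b_{j,k}(a)=b_j(a)$, each $g_j(t)\sim b_j(a)t^{m/2-j}$ as $t\to\infty$. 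For $0\leq j\leq(m+1)/2$ (with the convention $b_0(a)=1$), the subtraction $b_j(a)x^{m/2-j}$ built into $L$ becomes $b_j(a)\lambda^{1/2-j/m}t^{m/2-j}$ after rescaling and cancels this divergence, producing $K_{m,j}(a)\lambda^{1/2+(1-j)/m}$. For $j\geq(m+3)/2$ the decay $t^{m/2-j}$ is already integrable, so $K_{m,j}(a)=\int_0^{\infty}g_j(t)\,dt$ without subtraction.

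The delicate case is $m$ even and $j=m/2+1$, where the power of $\lambda$ collapses to $\lambda^{0}$. Here $g_{m/2+1}(t)\sim\nu(a)/t$ at infinity and the subtraction in $L$ is $\nu(a)/(x+1)$, which after rescaling becomes $\nu(a)/(t+\lambda^{-1/m})$. Split $g_{m/2+1}(t)=[g_{m/2+1}(t)-\nu(a)/(t+1)]+\nu(a)/(t+1)$: the bracket integrates to $K_{m,m/2+1}(a)$, while the remainder combines with the rescaled subtraction to give
\[
\nu(a)\int_0^{\infty}\left[\frac{1}{t+1}-\frac{1}{t+\lambda^{-1/m}}\right]dt=-\frac{\nu(a)}{m}\ln\lambda,
\]
exactly the logarithmic term in the Lemma. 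The representation \eqref{Kmj=} then follows by writing out $g_j=\sum_k b_{j,k}(a)\,t^{mk-j}/(t^m+1)^{k-1/2}$ and integrating term-by-term via Lemmas~A.1--A.4.

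The chief technical obstacle will be justifying the termwise asymptotic expansion uniformly in $x\geq 0$: the binomial series converges only where $|P(x)|<|x^m+\lambda|$, a condition that can fail on a bounded $x$-interval if $|P|$ exceeds $|\lambda|$ there. The fix is to use the sector lower bound on $|x^m+\lambda|$ in terms of $|\lambda|+x^m$ already established in Section~4, truncate at a finite $N$, bound $R_N$ pointwise, and let $N$ grow to extract as many asymptotic terms as the Lemma requires.
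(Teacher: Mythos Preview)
Your approach is essentially identical to the paper's: rescale $x=\lambda^{1/m}t$, expand the square root via the binomial series, regroup into the $g_j$, and treat the ranges $j\leq (m+1)/2$, $j=m/2+1$ (even $m$), and $j\geq (m+3)/2$ separately exactly as the paper does, including the $\frac{1}{t+1}-\frac{1}{t+\lambda^{-1/m}}$ computation producing the logarithm. The one step the paper makes explicit that you gloss over is invoking Cauchy's theorem to justify the substitution when $\lambda^{1/m}$ is non-real (so that the rescaled integral runs along a complex ray rather than the positive real axis and must be rotated back); otherwise the arguments match.
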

\begin{proof}
The function $L( {a},\lambda)$ is defined as an integral over $0\leq t<+\infty$. Namely,

\begin{equation}
L( {a},\lambda)=\int_0^{\infty}R(t,  {a}, \lambda)\,dt,
\end{equation}
where
\begin{align}
R(t, {a},\lambda)=\left\{
                    \begin{array}{rl}
                    &\sqrt{t^m+P_{m-1}(t)+\lambda}- t^{\frac{m}{2}}-\sum_{j=1}^{\frac{m+1}{2}}b_j( {a})t^{\frac{m}{2}-j} \quad \text{if $m$ is odd,}\\
                   &\sqrt{t^m+P_{m-1}(t)+\lambda}- t^{\frac{m}{2}}-\sum_{j=1}^{\frac{m}{2}}b_j( {a})t^{\frac{m}{2}-j}-\frac{b_{\frac{m}{2}+1}( {a})}{t+1} \quad \text{if $m$ is even.}
                    \end{array}
                         \right. \nonumber
\end{align}
 Then using Cauchy's integral formula, one can see that for all $\lambda$ with $|\lambda|$ large enough in the sector
 $|\arg(\lambda)|\leq \pi-\delta$,
\begin{equation}\label{asy_eql}
L( {a},\lambda)=\int_{0}^{+\infty}R(t, {a},\lambda)\, dt
=\lambda^{\frac{1}{m}}\int_0^{+\infty}R(\lambda^{\frac{1}{m}}\tau, {a},\lambda)\, d\tau,
\end{equation}
where
\begin{align}
R(\lambda^{\frac{1}{m}}& \tau,  {a},\lambda)\nonumber\\
=&\left\{
                    \begin{array}{rl}
                    &\lambda^{\frac{1}{2}}\left(\sqrt{\tau^m+1+\frac{P_{m-1}(\lambda^{\frac{1}{m}}\tau)}{\lambda}}- \tau^{\frac{m}{2}}-\sum_{j=1}^{\frac{m+1}{2}}b_j( {a})\frac{\tau^{\frac{m}{2}-j}}{\lambda^{\frac{j}{m}}}\right)\,\, \text{if $m$ is odd,}\\
                   &\lambda^{\frac{1}{2}}\left(\sqrt{\tau^m+1+\frac{P_{m-1}(\lambda^{\frac{1}{m}}\tau)}{\lambda}}- \tau^{\frac{m}{2}}-\sum_{j=1}^{\frac{m}{2}}b_j( {a})\frac{\tau^{\frac{m}{2}-j}}{\lambda^{\frac{j}{m}}}-\frac{\lambda^{-\frac{1}{2}}b_{\frac{m}{2}+1}( {a})}{\lambda^{\frac{1}{m}}\tau+1}\right)
 \,\, \text{if $m$ is even.}
                    \end{array}
                         \right. \nonumber
\end{align}

Next, we examine the following square root in $R(\lambda^{\frac{1}{m}}\tau,  {a},\lambda)$:
\begin{align}
\sqrt{\tau^m+1+\frac{P_{m-1}(\lambda^{\frac{1}{m}}\tau)}{\lambda}}
&=\sqrt{\tau^m+1}\sqrt{1+\frac{P_{m-1}(\lambda^{\frac{1}{m}}\tau)}{\lambda(\tau^m+1)}}\nonumber\\
&=\sqrt{\tau^m+1}\left(1+\sum_{k=1}^{\infty}{{\frac{1}{2}}\choose{k}}\left(\frac{P_{m-1}(\lambda^{\frac{1}{m}}\tau)}{\lambda(\tau^m+1)}\right)^k\right)\nonumber\\
&=\sqrt{\tau^m+1}+\sum_{j=1}^{\infty}\frac{g_j(\tau)}{\lambda^{\frac{j}{m}}},\nonumber
\end{align}
where
\begin{equation}\label{gjdef}
g_j(\tau)=\sum_{k=1}^{j}\frac{b_{j,k}( {a})\tau^{mk-j}}{\left(\tau^m+1\right)^{k-\frac{1}{2}}}.
\end{equation}
Thus,
\begin{align}
g_j(\tau)-b_j( {a})\tau^{\frac{m}{2}-j}&=\sum_{k=1}^jb_{j,k}( {a})\left(\frac{\tau^{mk-j}}{\left(\tau^m+1\right)^{k-\frac{1}{2}}}-\tau^{\frac{m}{2}-j}\right)\label{gjsepdef}\\
&\underset{\tau\to\infty}{=}\sum_{k=1}^jb_{j,k}( {a})\tau^{\frac{m}{2}-j}O\left(\frac{1}{\tau^m}\right)\nonumber\\
&\underset{\tau\to\infty}{=}O\left(\frac{1}{\tau^{\frac{m}{2}+j}}\right)\quad\text{for
all $1\leq j\leq\frac{m+1}{2}$}.\nonumber
\end{align}

So $\int_0^{\infty}\left|g_j(\tau)-b_j( {a})\tau^{\frac{m}{2}-j}\right|\,d\tau<+\infty$ for all $1\leq
j\leq\frac{m+1}{2}$. Next, when $m$ is even and $j=\frac{m}{2}+1$, we write
\begin{align}
&\int_0^{\infty}\left(g_{\frac{m}{2}+1}(\tau)-\frac{b_{\frac{m}{2}+1}( {a})}{\tau+\lambda^{-\frac{1}{m}}}\right)\, d\tau\nonumber\\
&=\int_0^{\infty}\left(g_{\frac{m}{2}+1}(\tau)-\frac{b_{\frac{m}{2}+1}( {a})}{\tau+1}\right)\, d\tau+b_{\frac{m}{2}+1}( {a})\int_0^{\infty}\left(\frac{1}{\tau+1}-\frac{1}{\tau+\lambda^{-\frac{1}{m}}}\right)\, d\tau\nonumber\\
&\overset{let}{=}K_{m,\frac{m}{2}+1}( {a})-\frac{b_{\frac{m}{2}+1}( {a})}{m}\ln(\lambda),\nonumber
\end{align}
where we take $\Im(\ln(\lambda))=\arg(\lambda)\in(-\pi,\pi)$. Also,
if $j\geq\frac{m+3}{2}$, then
$\int_0^{\infty}\left|g_j(\tau)\right|\,d\tau<+\infty$.

Finally, one can deduce \eqref{Kmj=} from \eqref{gjdef} and
\eqref{gjsepdef}.
\end{proof}

{\sc email contact:} kshin@westga.edu
\end{document}